\newcommand\bq{{\mathbf q}}
\newcommand\be{{\mathbf e}}
\newcommand\bpi{{\boldsymbol{\pi}}}
\newcommand\LL{{\mathbb L}}
\newcommand\EE{{\mathbb E}}
\newcommand\PP{{\mathbb P}}
\newcommand\RR{{\mathbb R}}
\newcommand\TT{{\mathbb T}}
\newcommand\ZZ{{\mathbb Z}}
\newtheorem{theo}{Theorem}
\newtheorem{lemma}{Lemma}
\newtheorem{remark}{Remark}
\newtheorem{cor}{Corollary}
\title[]{Thermal conductivity for a noisy disordered harmonic chain}
\author{C\'edric Bernardin}
\address{%
Universit\'e de Lyon and CNRS, UMPA, UMR-CNRS 5669, ENS-Lyon,
46, all\'ee d'Italie, 69364 Lyon Cedex 07 - France.
}%
\date{\today}
\thanks{\textsc{Acknowledgements.} We thank A. Dhar, J.L. Lebowitz and S. Olla for their interest in this work. We acknowledge the support of the French Ministry of Education through the ANR BLAN07-2184264 grant.}
\keywords{Thermal conductivity,  Green-Kubo formula, non-equilibrium systems, disordered systems}
\begin{document}


\maketitle

\begin{abstract}
We consider a $d$-dimensional disordered harmonic chain (DHC) perturbed by an energy conservative noise. We obtain uniform in the volume upper and lower bounds for the thermal conductivity defined through the Green-Kubo formula. These bounds indicate a positive finite conductivity. We prove also that the infinite volume homogenized Green-Kubo formula converges. 
\end{abstract}

 \section{Introduction}
 \label{sec:introduction}
 
Thermal transport in Fermi-Pasta-Ulam (FPU) chains is a subject of intense research (\cite{BLL}, \cite{LLP}). In a perfect crystal, equilibrium positions of atoms form a perfect regular configuration (e.g. a sublattice of ${\mathbb Z}^d$). Due to interactions between nearest-neighbor atoms and with a given substrate, real position of atoms are subject to fluctuations around equilibrium. Lattice vibrations are the carriers of heat current.  FPU models are described by Hamiltonian of the form:
\begin{equation*}
{\mathcal H} = \sum_x \cfrac{p_x^2}{2m_x} + \sum_x W(q_x)+ \sum_{|x-y|=1} V(q_x-q_y)
\end{equation*}
Here $q_x \in {\mathbb R}^d$ is the deviation of atom $x$ from its equilibrium position, $p_x$ is its momentum and $m_x$ its mass. Interactions between atoms are described by the potential $V$ while the pinning potential $W$ is for the interaction with the substrate. The main problem is to understand the dependance of the thermal conductivity $\kappa_N$ with the size $N$ of the system. Fourier's law requires a finite positive limit of $\kappa_N$ in the thermodynamic limit $N \to \infty$. In low dimensional systems it is largely accepted that anomalous heat conduction takes place as soon as momentum is conserved ($W=0$). For nonlinear systems only few rigorous results exist and require extra assumptions (\cite{BK}) or start from kinetic approximations (\cite{ALS}, \cite{LS}). Moreover numerical simulations are not very conclusive (\cite{BDLLOP} and references therein). On the other hand homogenous harmonic chain is an exactly solvable model easy to study. Nevertheless it does not reproduce expected behavior of real systems and it turns out that  $\kappa_N$ is of order $N$. This is due to the fact that phonons can travel ballistically along the chain without scattering (\cite{RLL}). 

Recently it has been proposed to perturb homogenous FPU chains by a conservative stochastic noise. The perturbation of an harmonic chain is sufficient to reproduce qualitatively what is expected for real anharmonic chains (\cite{BBO1}, \cite{B}, \cite{BO}). The dynamics becomes non-linear because of the noise but remains mathematically reachable. Perturbation of anharmonic chains is more difficult to study and only partial but rigorous results have been obtained (\cite{BBO2}).

We turn now to non homogenous chains. As it is well known, the presence of disorder generally induces localization of  the normal modes  and one can expect the latter to behave as perfect thermal insulators ($\kappa_N \to 0$). The only analytically tractable model is the one dimensional disordered harmonic chain (DHC).  Surprisingly the behavior of the thermal conductivity depends on boundary conditions and on the properties of the thermostats. Consider first the unpinned DHC. For fixed boundary conditions (the Casher-Lebowitz model, \cite{CL}), $\kappa_N \sim N^{1/2}$, while for free boundaries (the Rubin-Greer model, \cite{RG}), $\kappa_N \sim N^{-1/2}$. This curious phenomenon has been studied in \cite{D} in a more general setting and it turns out that "the exponent [of $\kappa_N$] depends not only on the properties of the disordered chain itself, but also on the spectral properties of the heat baths. For special choices of baths one gets the "Fourier  behavior" ".  If we add a pinning potential in the DHC, $\kappa_N$ becomes exponentially small in $N$.  

Recently, Dhar and Lebowitz (\cite{DL}) were interested in the effect of both disorder and anharmonicity. The conclusions of their numerical simulations are that the introduction of a small amount of phonon-phonon interactions in the DHC leads to a positive finite thermal conductivity. Moreover it seems that the transition takes place instantaneously without any finite critical value of anharmonicity.  

In this paper we propose to study this question for the conservative perturbed model introduced in \cite{B}, \cite{BO}. Our results are valid in any dimension (DHC has only been studied in the one dimensional case). We consider DHC perturbed by a stochastic noise conserving energy and destroying all other conservation laws. In view of the numerical simulations of \cite{DL} one would expect the model to become a normal conductor : $\kappa_N \to \kappa$ with $\kappa$ finite and positive. The behavior of the thermal conductivity is here studied in the linear response theory framework by using the Green-Kubo formula. Curiously behavior of the conductivity defined through Green-Kubo formula has not been studied for DHC. It would be interesting to know what is the order of divergence of the latter. For the perturbed DHC we obtain uniform finite positive lower and upper bounds for the $d$ dimensional finite volume Green-Kubo formula of the thermal conductivity with or without pinning (Theorem \ref{th:QGK}) so that the thermal conductivity is always finite and positive. In particular it shows the presence of the noise is sufficient to destroy localization of eigen-functions in pinned DHC. Linear response approach avoids the difficulty to deal with a non-equilibrium setting where effects of spectral properties of heat baths could add difficulties as in the case of purely DHC. In the non-equilibrium setting, we expect that since the Green-Kubo formula for the thermal conductivity of the perturbed DHC remains finite, it will not depend on the boundaries. As a second result (Theorem \ref{th:AGK}) we show that the homogenized infinite volume Green-Kubo formula $\kappa_{hom.}$ is well defined, positive and finite. 

The paper is organized as follows. In the first section we define the dynamics. In section $3$, we give heuristic arguments predicting the equality between the Green-Kubo formula and the homogenized infinite volume Green-Kubo formula. The latter is proved to exist and to be finite and positive. In section $4$ we obtain uniform (in the volume) lower and upper bounds for the finite volume Green-Kubo formula.      
   
\vspace{0,5 cm}
   
{\textbf{Notations :}}  The canonical basis of $\RR^d$ is noted
$(e_1,e_2, \ldots, e_d)$ and the coordinates of a vector $u \in
\RR^d$ are noted $(u^1,\ldots,u^d)$. Its Euclidian norm $|u|$  is
given by $|u|=\sqrt{(u^1)^2+ \ldots +(u^d)^2}$ and the scalar
product of $u$ and $v$ is $u \cdot v$.  

If $N$ is a positive integer, $\mathbb T_N^d$ denotes the $d$-dimensional 
discrete torus of length $N$. We identify $\mathbb T_N^d =
\left(\mathbb Z /N{\mathbb Z}\right)^d$, i.e. $x = x + k N e_j$
for any $j= 1,\dots,d$ and $k\in \mathbb Z$.  

If $F$ is a function from ${\mathbb Z}^d$ (or $\TT_N^d$) into $\mathbb
R$ then the (discrete) gradient of $F$ in the direction $e_j$ is
defined by $(\nabla_{e_j} F)(x)= F(x +e_j) -F(x)$ and the
Laplacian of $F$ is given by $(\Delta F) (x)= \sum_{j=1}^d
\left\{F(x +e_j)+F(x-e_j)-2F(x)\right\}$.

\section{The dynamics of the closed system}
\label{sec:dynamics}

The Hamiltonian of a non homogenous harmonic chain of length $N$ with periodic boundary conditions is given by
\begin{equation*}
{\mathcal H} = \sum_{x \in {\mathbb T}_N^d} \cfrac{|p_x|^2}{2m_x} + \cfrac{1}{2}\sum_{x \in {\TT}_N^d} \left\{q_x \cdot (\nu I -\omega \Delta)q_x\right\}
\end{equation*} 
where $p_x=m_x v_x$, $v_x \in {\mathbb R}^d$ is the velocity of the particle $x$ and $m_x >0$ its mass. $q_x \in {\mathbb R}^d$ is the displacement of the atom $x$ with respect to its equilibrium position. Parameters $\omega$ and $\nu$ regulate the strength of the interaction potential $V(r)=\omega r^2$ and the strength of the pinning potential $W(q)=\nu q^2$. We perturb the harmonic chain by a conservative noise acting only on the velocities such it conserves the total kinetic energy $\sum_x {p_x^2}/{(2 m_x)}$. We define $\pi_x =m_x^{1/2} v_x$ and the generator of the noise is given by
  \begin{equation*}
S=\cfrac{1}{2d^2} \sum_{i,j,k=1}^d\sum_{x \in {\TT}_N^d } \left[Y^{i,j}_{x,x+e_k}\right]^2
\end{equation*}
with
\begin{equation*}
Y^{i,j}_{x,x+e_k}= \pi_{x+e_k}^j \partial_{\pi^i_{x}} - {\pi^i_{x}}\partial_{\pi^j_{x+e_k}}
\end{equation*}

We consider the stochastic dynamics corresponding to the generator
\begin{equation}\label{FP}
 L = A + \gamma S
\end{equation}
where $A$ is the usual Hamiltonian vector field
\begin{equation*}
A=\sum_{x \in \TT_N^d} \left\{ \cfrac{\partial H}{\partial{p_x}} \cdot \partial_{q_x} -\cfrac{\partial H}{\partial{q_x}} \cdot\partial_{p_x} \right\} 
\end{equation*}
The parameter $\gamma > 0$ regulates the strength of the noise. In terms of the $\pi$'s, the Hamiltonian vector field $A$ is given by
\begin{equation*}
A= \sum_{x \in \TT_N^d} \left\{ \cfrac{1}{\sqrt{m_x}} \pi_x \cdot \partial_{q_x} + \cfrac{1}{\sqrt{m_x}} \left[ \omega \Delta q_x -\nu q_x \right] \cdot \partial_{\pi_x}\right\}
\end{equation*}
and the Hamiltonian by
\begin{equation*}
H=\sum_{x \in \TT_N^d} \left\{ \cfrac{|\pi_x|^2}{2} +q_x (\nu I -\omega \Delta)q_x \right\}
\end{equation*}

Energy of atom $x$ is given by 
\begin{equation*}
e_x = \cfrac{|p_x|^2}{2 m_x} + \cfrac{\nu |q_x|^2}{2} + \cfrac{\omega}{2} \sum_{y; |y-x|=1} | q_y -q_x |^2 
\end{equation*}

The energy conservation law can be read locally as
\begin{equation*}
   e_x(t) -   e_x(0) =  \sum_{k=1}^d \left(
J_{x -e _k, x }(t) - J_{x ,x  +e _k}(t)\right)=-\sum_{k=1}^d \left( \nabla_{e_k} J_{x-e_k, x} \right) (t)
 \end{equation*}
where $J_{x ,x  +e _k}(t)$ is the total energy current
between $x $ and $x  +e _k$ up to
time $t$. This can be written as
\begin{equation}
  \label{eq:tc}
  J_{x , x  +e _k}(t)=\int_0^t j_{x , x  +e _k}(s) \; ds +
  M_{x , x  +e _k}(t) 
\end{equation}
In the above $M_{x , x  +e _k}(t)$ are  martingales and the instantaneous current $j_{x,x+e_k}$ is given by
\begin{equation*}
j_{x,x+e_k} = j^a_{x,x+e_k} + \gamma j^{s}_{x,x+e_k}
\end{equation*}
where $j^a_{x,x+e_k}$ is the Hamiltonian contribution
\begin{equation*}
j^{a}_{x,x+e_k} = -\cfrac{1}{\sqrt m_x} \pi_x \cdot (q_{x+e_k} -q_{x -e_k})
\end{equation*}
and $j^s_{x,x+e_k}$ is the noise contribution
\begin{equation*}
j^s_{x,x+e_k} = -\cfrac{1}{d} \nabla_{e_k} (|\pi_x|^2)
\end{equation*}

We consider the closed dynamics with periodic boundary conditions starting from the canonical Gibbs measure with temperature $T=\beta^{-1}$
\begin{equation*}
\mu^N_{\beta} (d\bpi,d\bq) = Z_{N,\beta}^{-1} \exp\left( -\beta H \right) d\bpi d\bq
\end{equation*}
The law of the process starting from $\mu^N_\beta$ is noted ${\mathbb P}_\beta$.

The conductivity in the direction $e_1$ is defined by the Green-Kubo
formula as the limit (when it exists)
\begin{equation}
\label{eq:gk0}
\kappa^{1,1}(\{m_x\}) = \lim_{t\to\infty}  \lim_{N\to\infty} \frac 1{2 T^2 t} \frac 1{N^{d}}{\mathbb E}_\beta
\left(\left[\sum_{x \in
     \mathbb T_N^d} J_{x ,x +e _1}(t)\right]^2 \right) 
\end{equation}
Because of the periodic boundary conditions, since $j^s$ if a
\emph{gradient}, the corresponding  terms cancel,
and we can write 
\begin{equation}
\label{decomp0}
\begin{array}{lcl}
     \sum_x  J_{x ,x +e _1}(t) &=& \int_0^t \sum_x  j^a_{x ,
       x +e _1}(s) \; ds + \sum_x  M_{x ,x +e _1}(t)\\
       &=& \int_0^t {\frak J}_{e _1}(s) \; ds + {\frak M}_{e _1}(t)
\end{array}
\end{equation}
so that 
\begin{equation}
\label{eq:2005}
\begin{split}
&(t N^d)^{-1}{\mathbb E}_\beta\left(\left[\sum_x  J_{x ,x +\be_1}(t)\right]^2 \right)\\
&= (tN^d)^{-1}{\mathbb E}_\beta \left(\left[\int_0^t {\frak J}_{e _1}(s)
    ds\right]^2 \right)+(tN^{d})^{-1}{\mathbb E}_\beta \left({\frak M}^2_{e _1}(t)\right)\\ 
&\qquad +2(tN^d)^{-1}{\mathbb E}_\beta \left(\left[\int_0^t {\frak J}_{e _1}(s) \; ds\right]{\frak  M}_{e _1} (t) \right) 
\end{split}
\end{equation}

The third term on the RHS of (\ref{eq:2005}) is zero by a
time reversal argument and the martingale term gives a $\gamma/d$ contribution (see \cite{BBO2} for a proof).
\begin{equation}
  \label{eq:decorre}
  \begin{split}
    \frac 1{2 T^2 t} \frac 1{N^{d}}{\mathbb E}_\beta \left(\left[\sum_x 
        J_{x ,x +e _1}(t)\right]^2 \right)& \\
    = (2 T^2 t N^d)^{-1}&{\mathbb E}_\beta \left(\left[\int_0^t {\frak J}_{e_1} (s) ds\right]^2 \right) + \frac{\gamma}d 
\end{split}
\end{equation}


In order to study the large time behavior of 
\begin{equation*}
C (t)= \lim_{N \to \infty}  (2 T^2 t N^d)^{-1}{\mathbb E}_\beta \left(\left[\int_0^t {\frak J}_{e_1} (s) ds\right]^2 \right)
\end{equation*}
we study the asymptotics as $\lambda \to 0$ of the Laplace transform ${\frak L} (\lambda)$ of $ t C(t)$
\begin{equation*}
{\frak L} (\lambda)= \int_{0}^\infty e^{-\lambda t} t C(t) dt
\end{equation*}
By stationarity and integration by parts, we have
\begin{equation*}
{\frak L}(\lambda)=\lim_{N \to \infty} \cfrac{1}{\lambda^2 T^2} \int_0^\infty dt e^{-\lambda t} {\mathbb E}_\beta \left[{\frak J}_{e_1} (t) {\frak J}_{e_1} (0)\right]
\end{equation*} 

A normal finite conductivity corresponds (in a Tauberian sense) to a positive finite limit of $\lambda^{2} {\frak L} (\lambda)$ as $\lambda \to 0$. In this case, the conductivity $\kappa^{1,1} (\{m_x\})$ is equal to
\begin{equation}
\label{eq:18}
\kappa^{1,1} (\{m_x\})=\gamma/d + \lim_{\lambda \to 0} \lim_{N\to \infty} \int_0^{\infty} dt  e^{-\lambda t} N^{-d} {\mathbb E}_\beta \left[ {\frak J}_{e_1}(t)\, {\frak J}_{e_1} (0) \right]dt 
\end{equation}

The right hand side of (\ref{eq:18}) is the sum of two terms. The first one is only due to the noise and is of no interest. The second one is the contribution of Hamiltonian dynamics to the conductivity and this is this term we investigate in the sequel.

All these computations are valid as soon as we can take the infinite volume limit $N \to \infty$ and then the limit $\lambda \to 0$. In the homogenous case ($m_x=m$ for all $x$), one can show that all the limits exist and one can compute explicitely $C(t)$ (see section \ref{sec:lubounds}). In the non homogenous case, we can only prove such a convergence up to subsequences (see corollary \ref{cor:1}). What we are able to do is to prove upper and lower bounds which indicate a finite and positive contribution of the Hamiltonian dynamics to the conductivity (see section \ref{sec:lubounds}).

\section{Homogenized infinite volume Green-Kubo formula}
\label{sec:GK}

In this section, we show that the homogenized infinite volume Green-Kubo formula for the thermal conductivity is well defined, positive and finite. We give also heuristic arguments showing that for almost all realization of masses $m=\{m_x\}$, the Green-Kubo formula (\ref{eq:18}) coincides with the homogenized infinite volume Green-Kubo formula $\kappa_{hom.}$ defined in (\ref{eq:GKinfty}).

 Assume that masses $m=\{m_x\}_{x \in {\mathbb Z}^d}$ are distributed according to an ergodic stationary probability measure ${\mathbb E}^*$. A typical configuration in the phase space is noted $\omega= (\bpi,\bq)= ((\pi_x)_{x \in \TT_N^d}, (q_x)_{x\in \TT_N^d})$. The masses of the finite volume dynamics are obtained from the infinite sequence $\{ m_x\}$ by the map identity $x \in \{0,\ldots, N-1\}^d \subset {\mathbb Z}^d \to \TT_N^d$. For any $z \in {\TT}_N^d$ and any function $f(\omega,m)$ the translation of $f$ by $z$ is defined by
\begin{equation*}
(\tau_z f) (\omega,m)= f(\tau_z \omega, \tau_z m), \quad \tau_z \omega= ((\pi_{z+x})_x, (q_{z+x})_x), \quad (\tau_z m)_x =m_{x+z}
\end{equation*}
Observe that the dynamics is invariant under the action of the group of translation $\tau_z$. 
We indicate the dependance of the instantaneous current on the masses and on the configuration by 
$$j_{0,e_1}^{a} (\omega,m)=\cfrac{1}{\sqrt{m_0}}\pi_0 \cdot (q_{e_1}-q_{-e_1})$$

By translation invariance of the dynamics we have
\begin{eqnarray*}
N^{-d}\int_0^t dt e^{-\lambda t} {\mathbb E}_\beta \left[ {\frak J}_{e_1}(t)\, {\frak J}_{e_1} (0) \right]\\
=N^{-d} \int_0^t dt e^{-\lambda t} \sum_{x,y \in \TT_N^d} {\EE}_\beta \left[ [\tau_x j^{a}_{0,e_1}](\omega_t, m) \, [\tau_y j^{a}_{0,e_1}](\omega_0,m)\right]\\
=\cfrac{1}{N^d} \sum_{x \in \TT_N^d} F_N (\tau_{x} m)
\end{eqnarray*}
where
\begin{equation*}
F_N (m)= \int_0^\infty dt e^{-\lambda t} \sum_{z \in \TT_N^d} \EE_\beta \left[ j^{a}_{0,e_1} (\omega_t , m) \,  [\tau_z j^{a}_{0,e_1}](\omega_0, m)\right]
\end{equation*}

In appendix, we explain how to define the dynamics starting from the infinite volume Gibbs measure $\mu_\beta$ and we show the dynamics is stationary w.r.t. $\mu_\beta$. The law of the dynamics is noted $\PP_\beta$. 

For fixed positive $\lambda$, we conjecture that as $N$ goes to infinity, the finite volume dynamics is closed to the infinite volume dynamics in the sense that
\begin{equation*}
\lim_{N \to \infty} \left| F_N (m)-  \int_0^\infty dt e^{-\lambda t} \sum_{z \in \ZZ^d} {\EE}_\beta \left[ j^{a}_{0,e_1} (\omega_t , m)\, [\tau_z j^{a}_{0,e_1}](\omega_0, m)\right] \right|=0
\end{equation*}

Then, by ergodic theorem, we have 
\begin{equation}
\label{eq:21}
\lim_{N \to \infty} N^{-d}\int_0^t dt e^{-\lambda t} {\mathbb E}_{\beta} \left[ {\frak J}_{e_1}(t),\, {\frak J}_{e_1} (0) \right] = {\EE}^* \left[  \int_0^\infty dt e^{-\lambda t} \sum_{z \in \ZZ^d} {\EE}_\beta \left[ j^{a}_{0,e_1} (t)\,\tau_z j^{a}_{0,e_1}(0)\right] \right]
\end{equation}

We are not able to prove the convergence (\ref{eq:21}) but we can prove the following existence theorem for the homogenized infinite volume Green-Kubo formula.

\begin{theo}
\label{th:AGK}
Assume that $\{m_x\}$ is stationary under ${\PP}^*$ and there are positive constants $\underline{m}$ and ${\overline m}$ such that
\begin{equation*}
{\mathbb P}^* ({\underline m} \leq m_x \leq {\overline m})=1
\end{equation*}
The Hamiltonian contribution to the homogenized Green-Kubo formula for the thermal conductivity $\kappa^{1,1}_{hom.} -\gamma/d$
\begin{equation}
\label{eq:GKinfty}
\kappa^{1,1}_{hom.} - \gamma/d =\lim_{\lambda \to 0} {\EE}^* \left[  \int_0^\infty dt e^{-\lambda t} \sum_{z \in \ZZ^d} {\EE}_{\beta} \left[ j^{a}_{0,e_1} (t) \tau_z j^{a}_{0,e_1}(0)\right] \right] 
\end{equation}
exists, is positive and finite.
 \end{theo}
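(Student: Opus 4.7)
The plan is to set up the Green--Kubo formula as a resolvent problem on an annealed Hilbert space and apply a Kipnis--Varadhan variational argument, adapted to the non-reversible generator $L=A+\gamma S$. The annealed measure $\PP^* \otimes \mu_\beta$ is invariant under the joint translation $(\omega,m) \mapsto (\tau_z \omega, \tau_z m)$, so on local functions I define the semi-inner product $\langle f, g \rangle_* = \sum_{z \in \ZZ^d} \EE^*\EE_\beta[f \cdot (\tau_z g)]$, which, after quotienting by the null space, gives a Hilbert space $\mathcal H$. On $\mathcal H$ the generator $L$ extends as a closed operator with $A^*=-A$ antisymmetric and $-\gamma S \geq 0$ symmetric, and the quantity inside the limit in \eqref{eq:GKinfty} equals $V(\lambda) = \langle j^a_{0,e_1}, u_\lambda\rangle_*$ with $u_\lambda = (\lambda-L)^{-1} j^a_{0,e_1}$. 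Antisymmetry of $A$ then yields
\[
V(\lambda) = \lambda\|u_\lambda\|_*^2 + \gamma\langle u_\lambda, -S u_\lambda\rangle_* \geq 0.
\]

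For the uniform upper bound, I exploit the identity $A q_x = \pi_x / \sqrt{m_x}$: the explicit test function $\psi = q_0 \cdot q_{e_1} - q_0 \cdot q_{-e_1}$ satisfies $A\psi = j^a_{0, e_1} + r$, where $r = q_0 \cdot \pi_{e_1}/\sqrt{m_{e_1}} - q_0 \cdot \pi_{-e_1}/\sqrt{m_{-e_1}}$ is linear in $\pi$ with $q$-dependent coefficients. Since $S$ acts nontrivially on functions linear in $\pi$, one solves $Sg = r$ locally, with the norms $\|g\|_*$ and $\langle g, -Sg\rangle_*$ uniformly bounded thanks to $\underline m \leq m_x \leq \overline m$. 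The decomposition $j^a_{0,e_1} = A\psi - Sg$, combined with a sectorial Cauchy--Schwarz inequality, then yields $V(\lambda) \leq C$ uniformly in $\lambda$.

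For the existence of $\lim_{\lambda \to 0} V(\lambda)$ and its strict positivity, I appeal to the Kipnis--Varadhan variational principle for the sectorial form $\lambda - L$, which provides a representation of $V(\lambda)$ that is monotone non-increasing in $\lambda$. Combined with the upper bound, this produces convergence to a finite limit. For strict positivity, inserting into this variational formula a single local test function $f_0$ for which $\langle j^a_{0,e_1}, f_0\rangle_* \neq 0$ and the associated norms are finite gives a uniform positive lower bound on $V(\lambda)$.

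The main obstacle is the upper-bound step: solving $Sg = r$ amounts to inverting a linear system coupled across neighbouring sites with coefficients of the form $m_x^{-1/2}$, and controlling $\|g\|_*$ and $\langle g, -Sg\rangle_*$ uniformly in the random environment relies essentially on the deterministic mass bounds $\underline m \leq m_x \leq \overline m$ from the hypothesis. Without them the annealed estimate would not close, and the entire strategy would require a different approach to the disorder.
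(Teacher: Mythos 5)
Your functional-analytic setup (the annealed Hilbert space with the translation-summed inner product, the resolvent $u_\lambda=(\lambda-L)^{-1}j^a_{0,e_1}$, and the variational formula for $\ll j^a_{0,e_1},u_\lambda\gg$) coincides with the paper's, and your positivity and finiteness bounds are in the right spirit. The genuine gap is in the central step, the existence of $\lim_{\lambda\to 0}V(\lambda)$. The variational representation
\begin{equation*}
\sup_{u}\left\{ 2\ll u,j^a_{0,e_1}\gg-\|u\|_{1,\lambda}^2-\|Au\|_{-1,\lambda}^2\right\}
\end{equation*}
is \emph{not} monotone in $\lambda$: as $\lambda$ decreases, $-\|u\|_{1,\lambda}^2$ increases but $-\|Au\|_{-1,\lambda}^2$ decreases, and for a non-reversible generator the current autocorrelation need not be of positive type, so its Laplace transform need not be monotone either. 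Monotonicity is a reversible phenomenon; here $L=A+\gamma S$ satisfies no sector condition and $S$ has no spectral gap, so the standard Kipnis--Varadhan machinery only yields $\|u_\lambda\|_1\le C$ and weak subsequential limits in ${\mathcal H}_1$ --- which does not give convergence of $\ll u_\lambda,j^a_{0,e_1}\gg$. The paper's resolution is Lemma~\ref{lem:ua}: because $u_\lambda$ is (the $\LL^2$-closure of) a degree-two polynomial and $S\pi_x=-\pi_x$, the part of $u_\lambda$ antisymmetric in the $\pi$'s is an exact eigenfunction of $S$ with eigenvalue $-1$; on that eigenspace the ${\mathcal H}_1$ and $\LL^2_*$ norms are equivalent, and a two-parameter $(\lambda,\mu)$ argument upgrades weak to strong ${\mathcal H}_1$ convergence and identifies the limit as $\gamma\ll u_0,(-S)u_0\gg$. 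You need this observation, or some substitute for the missing spectral gap, to close the argument.

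Two smaller points. For the upper bound your decomposition $j^a_{0,e_1}=A\psi-Sg$ is unnecessary and partly problematic: since $S$ acts only on the $\pi$'s and $j^a_{0,e_1}$ is linear in $\pi_0$, one has $Sj^a_{0,e_1}=-j^a_{0,e_1}$ directly, hence $\|j^a_{0,e_1}\|_{-1,\lambda}^2=(\lambda+\gamma)^{-1}\ll j^a_{0,e_1},j^a_{0,e_1}\gg<\infty$ and the bound follows by dropping $\|Au\|_{-1,\lambda}^2$ in the variational formula. (Your $\psi=q_0\cdot(q_{e_1}-q_{-e_1})$ is not a function of the gradient field, so in the unpinned low-dimensional case it is not even in the state space; and solving $Sg=r$ is trivial --- $g=-r$, since $r$ is again linear in $\pi$ and $S$ is diagonal on such monomials, not a coupled system.) For strict positivity, picking a local $f_0$ with $\ll j^a_{0,e_1},f_0\gg\neq 0$ is not enough: you must also control $\|Af_0\|_{-1,\lambda}^2$ uniformly as $\lambda\to 0$, which is delicate precisely because $S$ is not gapped. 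The paper's choice $f_0=\sum_x\sqrt{m_x}\,\pi_x\cdot(q_{x+e_1}-q_{x-e_1})$ works because $Af_0$ is a combination of $\pi_x\cdot\pi_{x+e_1}$, again an eigenfunction of $S$ (eigenvalue $-(2+d^{-2})$), so its $H_{-1,\lambda}$ norm is explicitly computable and bounded; this also produces the quantitative lower bound in terms of $\underline m,\overline m,\gamma,D$.
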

 
 \begin{proof}
 The proof closely follows \cite{BLLO} and \cite{Ben}. With respect to the self consistent model of \cite{BLLO}, the symmetric part of the generator ${S}$ does not have a spectral gap. To overcome this difficulty, we prove in lemma \ref{lem:ua} that the antisymmetric part of the resolvent solution is an eigenfunction of ${S}$. It turns out that it is sufficient to conclude the proof.
 
 We define the following semi-inner product on $\LL^2 (\PP^* \otimes \mu_\beta)$
 \begin{eqnarray*}
 \ll f,g \gg &=& \sum_{z \in \ZZ^d} \left\{ \EE^* \left[ \mu_\beta( f \tau_z g) -\mu_\beta (f) \mu_\beta(g) \right]\right\}\\
 &=& \lim_{K \to \infty} \cfrac{1}{(2K+1)^d} \sum_{\substack{|x| \le K \\ |y| \le K}}\left\{ \EE^* \left[ \mu_\beta( \tau_x f \tau_y g) -\mu_\beta (f) \mu_\beta(g) \right]\right\}
 \end{eqnarray*}
We denote by ${\LL}^2_*$ the completion of the space of square integrable local functions w.r.t. this semi-inner product. The generator ${L}$ has the decomposition ${A} +\gamma {S}$ in antisymmetric and symmetric part in $\LL^2_*$. The $H_1$ norm corresponding to the symmetric part is denoted
\begin{equation*}
\| f \|_1^2 = \ll f, (-{S}) f \gg
\end{equation*} 
and ${\mathcal H}_1$ is the Hilbert space obtained by the completion of $\LL^2_*$ w.r.t. this norm.

Let $u_\lambda$ be the solution of the resolvent equation 
\begin{equation}
\label{eq:res}
\lambda u_\lambda -{L} u_\lambda = j^a_{0,e_1}
\end{equation} 
 
 We have to prove that $\ll u_\lambda, j_{0,e_1} \gg$ converges as $\lambda$ goes to $0$ and that the limit is positive and finite. 
 
 We multiply (\ref{eq:res}) by $u_\lambda$ and integrate w.r.t. $\ll\cdot,\cdot\gg$ and we get
 \begin{equation*}
 \lambda \ll u_\lambda, u_\lambda \gg +\gamma \| u_\lambda \|_1^2 = \ll u_\lambda, j^a_{0,e_1} \gg
 \end{equation*}
 Since ${S} (j^a_{0,e_1})= -j^a_{0,e_1}$ (see lemma \ref{lem:calcul}), by Schwarz inequality, we have
 \begin{equation*}
 \| u_\lambda \|_1^2 \leq C^2 \gamma^{-1}
 \end{equation*}
 and
 \begin{equation*}
 \lambda \ll u_\lambda, u_\lambda \gg \leq C^2 \gamma^{-1}
 \end{equation*}
Since $(u_\lambda)_{\lambda}$ is a bounded sequence in ${\mathcal H}_1$, we can extract a weakly converging subsequence in ${\mathcal H}_1$. We continue to denote this subsequence $(u_\lambda)_\lambda$ and we note $u_0$ the limit.

Let $u_\lambda (p,q)= u_\lambda^s (p,q) + u_{\lambda}^a (p,q)$ be the decomposition of $u_\lambda$ in its symmetric and antisymmetric part in the $p$'s. Since $j^a_{0,e_1}$ is antisymmetric in the $p$'s, we have that $\ll u_\lambda, j^a_{0,e_1} \gg = \ll u_\lambda^a , j^a_{0,e_1} \gg$. Furthermore ${S}$ preserves the parity in $p$ while it is inverted by ${A}$. We have the following decomposition
\begin{eqnarray*}
\lambda u_{\lambda}^s - \gamma {S} u_{\lambda}^s -{A} u_{\lambda}^a =0\\
\mu u_{\mu}^a - \gamma {S} u_{\mu}^a -{A} u_{\mu}^s =j^a_{0,e_1}
\end{eqnarray*} 

We multiply the first equality  by $u_\mu^s$ and the second by $u_\lambda^a$ and we use the antisymmetry of ${A}$. We get
\begin{equation*}
\ll u_\lambda^a, j^a_{0,e_1} \gg = \mu \ll u_{\mu}^a, u_\mu^a \gg + \lambda \ll u_{\lambda}^s, u_\lambda^s \gg + \gamma \ll u_\lambda, (-{S}) u_\mu \gg
\end{equation*}

In lemma \ref{lem:ua}, we prove that ${S} u_{\lambda}^a = -u_{\lambda}^a$. It follows that
\begin{equation*}
\ll u_\lambda^a , u_\lambda^a \gg = \| u_\lambda^a\|_1^2 = \|u_\lambda \|_1^2 - \|u_\lambda^s\|_1^2 \le C^2 \gamma^{-1}
\end{equation*}

Remark that $u_\lambda^a$ and $u_\lambda^s$ converge weakly in ${\mathcal H}_1$ respectively to $u_0^a$ and to $u_0^s$. We first take the limit as $\lambda \to 0$ and then as $\mu \to 0$ and we obtain
\begin{equation*}
\ll u_0, j^a_{0,e_1} \gg = \gamma \ll u_0, (- S) u_0 \gg
\end{equation*}
On the other hand, since ${S} j^a_{0,e_1} = -j^a_{0,e_1}$, we have
\begin{eqnarray*}
\ll u_0, j^a_{0,e_1} \gg = \lim_{\lambda \to 0} \ll u_\lambda , j^a_{0,e_1} \gg \nonumber \\
=\lim_{\lambda \to 0} \left[ \lambda \ll u_\lambda, u_\lambda \gg_*  + \ll u_\lambda, (-{A}) u_\lambda \gg + \gamma \ll u_\lambda, (- {S} ) u_\lambda \gg \right] \nonumber \\
=\lim_{\lambda \to 0} \left[ \lambda \ll u_\lambda, u_\lambda \gg_* + \gamma \ll u_\lambda, (- {S} ) u_\lambda \gg \right] \nonumber \\
\geq \lim_{\lambda \to 0} \lambda \ll u_\lambda, u_\lambda \gg + \gamma \ll u_0, (-S) u_0 \gg \nonumber
\end{eqnarray*}
where the last inequality follows from the weak convergence in ${ \mathcal H}_1$ of $(u_\lambda)_\lambda$ to $u_0$. It implies
\begin{equation*}
\lim_{\lambda \to 0} \lambda \ll u_\lambda, u_\lambda \gg =0
\end{equation*}
so that $u_\lambda$ converges strongly to $u_0$ in ${\mathcal H}_1$. Hence $\ll u_\lambda, j_{0,e_1} \gg$ converges to $\gamma \ll u_0, -S u_0 \gg$. Uniqueness of the limit follows by a standard argument.

The positivity and finiteness of the limit is postponed to lemma \ref{lem:amenophis}.
 \end{proof}
 
 \begin{lemma}
 \label{lem:ua}
 Let $u_\lambda$ be the solution of the resolvent equation
 \begin{equation*}
 \lambda u_\lambda -{L} u_\lambda = j^a_{0,e_1}
 \end{equation*}
Let $u_\lambda^a$ the antisymmetric part of $u_\lambda$ with respect to the $\pi$'s. $u_\lambda^a$ is such that
 \begin{equation*}
 {S} u_{\lambda}^a = -u_\lambda^a
 \end{equation*}
 \end{lemma}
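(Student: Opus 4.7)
My plan has two steps: (i) show that $S$ acts as $-\mathrm{Id}$ on any function of the form $f(\bpi,\bq)=\sum_{y,l}a_y^l(\bq)\pi_y^l$ linear in the $\pi$'s; (ii) show that $u_\lambda^a$ is of this form.

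For (i), a direct computation from the definition of $Y^{i,j}_{x,x+e_k}$ yields $S\pi_y^l=-\pi_y^l$: applying $Y^{i,j}_{x,x+e_k}$ once to $\pi_y^l$ and then once more, all surviving Kronecker deltas produce a contribution $-\pi_y^l$, and summing over $(i,j,k,x)$ gives exactly $2d^2$ such terms, matching the normalization of $S$. This is the same calculation that underlies the identity $Sj^a_{0,e_1}=-j^a_{0,e_1}$ used in the proof of Theorem \ref{th:AGK}. Since $S$ contains only $\pi$-multiplications and $\pi$-derivatives, one may factor out any $q$-dependent coefficient, so $Sf=-f$ for every $f$ linear in $\bpi$.

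For (ii) I would exploit the fact that $L=A+\gamma S$ preserves the total polynomial degree in $(\bpi,\bq)$. Indeed $A=A_++A_-$, with $A_+=\sum_x m_x^{-1/2}\pi_x\cdot\partial_{q_x}$ raising the $\pi$-degree by one and lowering the $q$-degree by one, and $A_-=\sum_x m_x^{-1/2}(\omega\Delta q_x-\nu q_x)\cdot\partial_{\pi_x}$ doing the opposite; both preserve the total degree, as does $S$. Since $j^a_{0,e_1}=-m_0^{-1/2}\pi_0\cdot(q_{e_1}-q_{-e_1})$ has total degree $2$, I would solve the resolvent equation $\lambda v-Lv=j^a_{0,e_1}$ within the $L$-invariant subspace $\mathcal P_2$ of polynomials of total degree $\leq 2$. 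This reduces to a linear system on the site-indexed coefficients of a quadratic polynomial, solvable by coercivity of $\lambda-L$; the resulting polynomial belongs to $\mathcal H_1$, and uniqueness of the resolvent solution (Lax--Milgram) identifies it with $u_\lambda$. The antisymmetric part of an element of $\mathcal P_2$ has odd $\pi$-degree $\leq 2$, so it is exactly linear in $\bpi$; applying (i) gives $Su_\lambda^a=-u_\lambda^a$.

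The main obstacle I anticipate is the rigorous reduction to $\mathcal P_2$ in infinite volume: $\mathcal P_2$ is infinite-dimensional, its elements being parametrized by site-indexed coefficients over $\ZZ^d$ and $\ZZ^d\times\ZZ^d$, so one must establish solvability of the resulting infinite linear system together with membership of the solution in $\mathcal H_1$. Coercivity of the quadratic form $\lambda\ll v,v\gg+\gamma\|v\|_1^2$ restricted to quadratic polynomials, combined with the translation invariance of the dynamics, should provide the required control.
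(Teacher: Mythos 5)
Your two ingredients are exactly the ones the paper uses: (i) $S$ only differentiates and multiplies in the $\pi$ variables and satisfies $S\pi_x^\ell=-\pi_x^\ell$ (lemma \ref{lem:calcul}), so $S=-\mathrm{Id}$ on functions linear in $\bpi$ with $q$-dependent coefficients; (ii) $L$ preserves polynomial degree, $j^a_{0,e_1}$ is quadratic, hence $u_\lambda$ is ``quadratic'' and its $\pi$-odd part is $\pi$-linear. The only real divergence is how you justify (ii), and there your route is heavier than necessary. You propose to construct the solution inside $\mathcal P_2$ by Lax--Milgram on the site-indexed coefficients and then identify it with $u_\lambda$ by uniqueness; as you note, this forces you to solve an infinite-dimensional linear system, and moreover the identification step itself already requires knowing that $L$ maps (the closure of) $\mathcal P_2$ into itself --- otherwise the Lax--Milgram solution on the subspace only satisfies the weak equation against test functions from that subspace. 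Once you have that invariance, the construction becomes superfluous: if $X$ denotes the $\LL^2(\mu_\beta)$-closure of the degree-$2$ polynomials, invariance of the closed subspace $X$ under $L$ gives invariance under the resolvent $(\lambda-L)^{-1}$, so $u_\lambda=(\lambda-L)^{-1}j^a_{0,e_1}\in X$ for free. This is what the paper does. It then approximates $u_\lambda$ in $\LL^2$ by genuine quadratic polynomials $v_\varepsilon$, notes that antisymmetrization in $\pi$ is an $\LL^2$-contraction so $v_\varepsilon^a\to u_\lambda^a$, applies (i) to get $Sv_\varepsilon^a=-v_\varepsilon^a$, and passes to the limit (this last step uses that $S$ is closed, or equivalently that $S$ restricted to the closed subspace of $\pi$-linear functions is the bounded operator $-\mathrm{Id}$). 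So your proof is correct in all its ideas; the obstacle you flag dissolves if you replace the explicit solution of the resolvent equation in $\mathcal P_2$ by the soft invariance-of-the-resolvent argument plus an approximation.
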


\begin{proof}
Let $X$ be the closure in $\LL^2 (\mu_\beta)$ of the  space of polynomial functions in $\pi$ and $q$ of degree $2$. The generator ${L}$ transforms a polynomial function in a polynomial function and  conserves the degree so that the image of $X$ under ${L}$ is included in $X$. Since $j^a_{0,e_1}$ is in $X$, $u_\lambda$ is in $X$. Let $\varepsilon >0$ and consider $v_\varepsilon$ a polynomial function of degree $2$ such that
\begin{equation*}
\mu_\beta( [u_\lambda -v_\varepsilon ]^2) \le \varepsilon
\end{equation*} 
One easily shows that 
\begin{equation*}
\mu_\beta ([u_\lambda^a -v_\varepsilon^a]^2) \le \varepsilon
\end{equation*} 
Moreover, $v_\varepsilon^a$ is of the form
\begin{equation*}
v_\varepsilon^a = \sum_{x,y \in \ZZ^d} \rho(x,y) \pi_x q_y
\end{equation*}
where $\rho$ is a function with compact support. Since ${S} \pi_x = -\pi_x$, we have ${S} v_\varepsilon^a = -v_{\varepsilon}^a$. As $\varepsilon$ goes to $0$, we get
\begin{equation*}
S u_\lambda^a = -u_\lambda^a 
\end{equation*}
\end{proof}

\section{Lower and upper bounds for the Green-Kubo formula}
\label{sec:lubounds}

The canonical measure $\mu^N_\beta$ with temperature $T=\beta^{-1}$ and periodic boundary conditions on ${\mathbb T}_N^d$ is denoted by $<\cdot>$ and the scalar product associated in ${\mathbb L}^2(\mu^N_\beta)$ by $<\cdot,\cdot>$. 

The dynamics is given by (\ref{FP}) and $\{m_x\}_{x \in {\mathbb T}_N^d}$ is a sequence of positive masses bounded above and below by $\underline{m}$ and ${\overline m}$.  The total current in the first direction $e_1$ is given by
\begin{equation*}
{\frak J}_{e_1}=\omega \sum_{z} \cfrac{1}{\sqrt m_z} \pi_z \cdot (q_{z+e_1} -q_{z-e_1})
\end{equation*}

Before considering the non homogenous case, we compute briefly the time current-current correlations in the homogenous case (i.e. $m_x=m$ for all $x$). Let us define
\begin{equation*}
C_N (t)= (2 T^2 t N^d)^{-1}{\mathbb E}_\beta\left(\left[\int_0^t {\frak J}_{e_1} (s) ds\right]^2 \right)
\end{equation*} 
Since starting from $\mu_\beta$ the process is stationary we have
\begin{equation*}
C_N (t)= \cfrac{1}{T^2 t N^d}\int_0^t ds \int_{0}^s du {\mathbb E}_\beta \left[ {\frak J}_{e_1} (u), {\frak J}_{e_1} (0)\right]
\end{equation*} 
performing two integration by parts, one obtains that the Laplace transform ${\frak L}_N (\lambda)$ of $t C_N (t)$ is equal to  
\begin{equation*}
{\frak L}_N (\lambda)=\cfrac{1}{\lambda^2 N^d T^2} \int_0^\infty dt e^{-\lambda t} {\mathbb E}_\beta \left[{\frak J}_{e_1} (t), {\frak J}_{e_1} (0)\right]
\end{equation*} 
This last quantity is equal to
\begin{equation*}
\cfrac{1}{\lambda^2 T^2 N^d} < {\frak J}_{e_1}, (\lambda -L)^{-1} {\frak J}_{e_1}>
\end{equation*}
A simple but crucial computation shows that
\begin{equation*}
(\lambda -L)^{-1} {\frak J}_{e_1}= \cfrac{{\frak J}_{e_1}}{\lambda +\gamma}
\end{equation*} 
so that ${\frak L}_N (\lambda)$ is given by
\begin{equation*}
{\frak L}_N (\lambda)= \cfrac{<{\frak J}_{e_1}, {\frak J}_{e_1}>}{T^2 N^d \lambda^2(\lambda + \gamma)}
\end{equation*}
Let $D_N=D_N (\omega,\nu)$ be the constant
\begin{equation}
\label{eq:D_N}
D_N= T^{-1} \omega^2 \sum_{k=1}^d <(q_{e_1}^k -q_{-e_1}^k)^2>=\cfrac{1}{N^d} \sum_{\xi \in \TT_N^d} \left( \cfrac{4\omega^2 \sum_{j=1}^d \sin^2 (\pi \xi^j /N)}{\nu +4\omega \sum_{j=1}^d \sin^2 (\pi \xi^j /N)}\right)
\end{equation}
One computes easily $<{\frak J}_{e_1}, {\frak J}_{e_1}>$ and after inversion of the Laplace transform, one gets
\begin{equation*}
C_N (t)=\cfrac{D_N}{\gamma m} \left( 1+ \cfrac{1}{\gamma t} (1-e^{-\gamma t}) \right) 
\end{equation*}
As $N$ and then $t$ goes to infinity, it converges to the constant $D/(\gamma m)$ where
\begin{equation}
\label{eq:D}
D = \int_{\xi \in [0,1]^d} \left( \cfrac{4\omega^2 \sum_{j=1}^d \sin^2 (\pi \xi^j)}{\nu +4\omega \sum_{j=1}^d \sin^2 (\pi \xi^j)}\right)d\xi^1 \ldots d\xi^d
\end{equation}

One concludes that the thermal conductivity is given by
\begin{equation*}
\kappa_{1,1} (\{m\})= \cfrac{D}{\gamma m} +\cfrac{\gamma}{d}
\end{equation*}

Observe that if the noise becomes weaker (i.e. $\gamma \to 0$), we obtain a purely homogenous harmonic chain and the thermal conductivity is infinite. \\

In the non homogeneous case, we are not able to obtain explicitly the solution $h$ of the resolvent equation $(\lambda -L)h={\frak J}_{e_1}$ but we obtain upper and lower bounds for the Laplace transform of the time current-current correlations function which indicate a finite positive Hamiltonian contribution to the conductivity (for any bounded below and above sequence of masses). This is the content of the following theorem.

\begin{theo}
\label{th:QGK}
There exists a positive constant $C>0$ independent of $\lambda$ and $N$ such that
\begin{equation}
\begin{split}
C^{-1} \le \liminf_{\lambda \to 0} \liminf_{N \to \infty} \int_0^{\infty} e^{-\lambda t} N^{-d} {\mathbb E}_\beta{\left[{\frak J}_{e_1}(t),\, {\frak J}_{e_1} (0) \right]}dt \\
 \le \limsup_{\lambda \to 0} \limsup_{N\to \infty} \int_0^{\infty} e^{-\lambda t} N^{-d} {\mathbb E}_\beta {\left[{\frak J}_{e_1}(t),\, {\frak J}_{e_1} (0) \right]}dt \le C
\end{split}
\end{equation}
\end{theo}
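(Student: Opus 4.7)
The plan is to estimate $\Phi_N(\lambda) := N^{-d}\langle {\frak J}_{e_1},(\lambda-L)^{-1}{\frak J}_{e_1}\rangle$, to which the Laplace transform in the statement reduces by stationarity of $\mu^N_\beta$. Setting $h=(\lambda-L)^{-1}{\frak J}_{e_1}$, the basic identity $\lambda\|h\|^2+\gamma\|h\|_1^2=\langle {\frak J}_{e_1},h\rangle$ (with $\|h\|_1^2=\langle h,-Sh\rangle$) follows by pairing the resolvent equation with $h$ and using that $A$ is antisymmetric and $S$ symmetric in $\LL^2(\mu^N_\beta)$. For the upper bound I note that $-S{\frak J}_{e_1}={\frak J}_{e_1}$ (by Lemma \ref{lem:calcul}, since ${\frak J}_{e_1}$ is linear in the $\pi$'s), so the $H^{-1}(S)$-norm satisfies $\|{\frak J}_{e_1}\|_{-1}^2=\|{\frak J}_{e_1}\|^2$. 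Young's inequality $\langle {\frak J}_{e_1},h\rangle\leq \frac{1}{2\gamma}\|{\frak J}_{e_1}\|_{-1}^2+\frac{\gamma}{2}\|h\|_1^2$ combined with the identity yields $\langle {\frak J}_{e_1},h\rangle\leq \gamma^{-1}\|{\frak J}_{e_1}\|^2$. A Wick computation gives $N^{-d}\|{\frak J}_{e_1}\|^2=\omega^2 T N^{-d}\sum_z m_z^{-1}\mu^N_\beta(|q_{z+e_1}-q_{z-e_1}|^2)\leq \omega^2 T\underline m^{-1}D$ with $D:=\mu^N_\beta(|q_{e_1}-q_{-e_1}|^2)$, uniformly bounded because $\pi$ and $q$ are independent under $\mu^N_\beta$ and the $q$-marginal is translation invariant and mass-free.

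For the lower bound, I start from $\langle {\frak J}_{e_1},\varphi\rangle=\langle h,(\lambda-\gamma S)\varphi\rangle+\langle h,A\varphi\rangle$ (using $L^*=-A+\gamma S$) and apply Young's inequality to both terms on the right in the $H_{\pm 1,\lambda}(S)$ norms, then optimize over a scalar multiple of $\varphi$ to get
\[
\langle {\frak J}_{e_1},h\rangle\geq \frac{\langle {\frak J}_{e_1},\varphi\rangle^2}{2\bigl[\langle \varphi,(\lambda-\gamma S)\varphi\rangle+\langle A\varphi,(\lambda-\gamma S)^{-1}A\varphi\rangle\bigr]}
\]
for any test function $\varphi$. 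The critical choice is $\varphi=\sum_{z\in\TT_N^d}\sqrt{m_z}\,\pi_z\cdot(q_{z+e_1}-q_{z-e_1})$, i.e. ${\frak J}_{e_1}$ with $1/\sqrt{m_z}$ replaced by $\sqrt{m_z}$. With this choice the pure-$q$ component of $A\varphi$ equals $\sum_z(\omega\Delta q_z-\nu q_z)\cdot(q_{z+e_1}-q_{z-e_1})$, which is mass-independent and vanishes identically on the torus by reindexing (the pinning contribution is telescoping; the Laplacian contribution is killed by self-adjointness of $\Delta$ combined with the shift-antisymmetry of $q_{z+e_1}-q_{z-e_1}$). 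Hence $A\varphi$ reduces to the pure $\pi\pi$ piece $R=\sum_z \frac{m_z-m_{z+e_1}}{\sqrt{m_zm_{z+e_1}}}\pi_z\cdot\pi_{z+e_1}$, which is orthogonal to $\ker S$.

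A direct computation from the definition of $Y^{i,j}_{x,x+e_k}$ shows that each monomial $\pi^i_z\pi^j_{z+e_1}$ is an eigenfunction of $-S$ with positive eigenvalue $2+d^{-2}$, so $R$ is itself an eigenfunction and $\langle A\varphi,(\lambda-\gamma S)^{-1}A\varphi\rangle=\|R\|^2/(\lambda+\gamma(2+d^{-2}))$ remains bounded as $\lambda\to 0$. Combining this with the Wick computations $\langle {\frak J}_{e_1},\varphi\rangle=\omega T D N^d$, $\langle \varphi,(\lambda-\gamma S)\varphi\rangle\leq (\lambda+\gamma) T D \bar m N^d$, and $\|R\|^2\leq T^2 d(\bar m/\underline m-1)^2 N^d$, plugging into the variational bound and dividing by $N^d$ yields a positive lower bound uniform in $N$ and $\lambda$. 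The main obstacle throughout is the choice of test function: the naive choice $\varphi\propto {\frak J}_{e_1}$ already fails, because $A{\frak J}_{e_1}$ carries a nontrivial pure-$q$ component lying in $\ker S$, against which $(\lambda-\gamma S)^{-1}$ acts as $\lambda^{-1}$ and diverges as $\lambda\to 0$. The substitution $1/\sqrt{m_z}\mapsto \sqrt{m_z}$ is the precise modification that cancels the $1/\sqrt{m_z}$ appearing in $A(\pi_z)$, reducing the pure-$q$ part of $A\varphi$ to the homogeneous expression, which then vanishes by a discrete integration by parts.
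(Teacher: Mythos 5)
Your proof is correct and follows essentially the same route as the paper: both reduce to the resolvent quantity $N^{-d}\langle {\frak J}_{e_1},(\lambda-L)^{-1}{\frak J}_{e_1}\rangle$, obtain the upper bound from $S{\frak J}_{e_1}=-{\frak J}_{e_1}$ by discarding the antisymmetric contribution, and obtain the lower bound from the variational formula with the test function $\sum_x\sqrt{m_x}\,\pi_x\cdot(q_{x+e_1}-q_{x-e_1})$, whose image under $A$ is a pure $\pi\pi$ eigenfunction of $S$. The only cosmetic difference is that you rederive the two directions of the variational formula by hand via Young's inequality instead of quoting it, and you make explicit the cancellation of the pure-$q$ part of $A\varphi$ that motivates the choice of test function.
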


\begin{proof}
We have
\begin{equation}
\label{eq:lap}
\int_0^{\infty} e^{-\lambda t} N^{-d} {\mathbb E}{\left[{\frak J}_{e_1} (t),\, {\frak J}_{e_1} (0) \right]}dt = \cfrac{1}{N^d} < {\frak J}_{e_1}, (\lambda -L)^{-1} {\frak J}_{e_1}>
\end{equation}
The proof is based on a variational formula for the right hand side of (\ref{eq:lap}) and a suitable choice of test functions over which the supremum is carried. We need to introduce Sobolev norms associated to the operator ${\gamma S}$. $H_{1,\lambda}$ norm is defined by
\begin{equation*}
\|f\|_{1,\lambda}^2 = <(\lambda -\gamma S)f,f>
\end{equation*}
and the $H_{-1,\lambda}$ norm is the dual norm of the $H_{1,\lambda}$ norm in ${\mathbb L}^2 (\mu_\beta^N)$
\begin{equation*}
\|f\|_{-1,\lambda}^2 = <(\lambda -\gamma S)^{-1}f,f>=\sup_{g} \left\{ 2<f,g>- <g, (\lambda -\gamma S) g> \right\}
\end{equation*}
where the supremum is carried over local smooth functions $g(\bpi,\bq)$ from $({\mathbb R}^d \times {\mathbb R}^d)^{{\mathbb T}_N^d}$ into $\mathbb R$.

Recall now that the generator $L$ is given by the sum $A+\gamma S$ where $A$ is antisymmetric and $S$ is symmetric (in ${\mathbb L}^2 (\mu^N_\beta)$). The variational formula is the following
\begin{equation}
\label{eq:lap2}
\left<  {\frak J}_{e_1}, (\lambda -L)^{-1} {\frak J}_{e_1} \right>=\sup_{u} \left\{ 2<u,{\frak J}> -\|u \|_{1,\lambda}^2 -\|Au\|_{-1,\lambda}^2\right\}
\end{equation}
where the supremum is carried over the set of smooth functions $u(\bpi,\bq)$ from $({\mathbb R}^d \times {\mathbb R}^d)^{{\mathbb T}_N^d}$.

\vspace{0,5 cm} 
  
{\textbf {Upper bound :}}
By neglecting the term $\|Au\|_{-1,\lambda}^2$ in the variational formula (\ref{eq:lap2}), we get
\begin{equation*}
<{\frak J}_{e_1}, (\lambda -L)^{-1} {\frak J}_{e_1}> \leq \| {\frak J}_{e_1} \|^2_{-1, \lambda}=(\lambda+\gamma)^{-1} <{\frak J}_{e_1}, {\frak J}_{e_1}>
\end{equation*}
The last equality follows from the fact that $S{\frak J}_{e_1}=-{\frak J}_{e_1}$ (see lemma \ref{lem:calcul}). Since $<\cdot>$ is the Gaussian measure $\mu_\beta$, the $\pi$'s are Gaussian product independent variables and we have
\begin{eqnarray*}
 <{\frak J}_{e_1}, {\frak J}_{e_1}>&=& \omega^2 \sum_{x,y}\sum_{k,\ell} \cfrac{1}{\sqrt{m_x m_y}} \left< {\pi}^k_x {\pi}^\ell_y (q^k_{x+e_1} -q^k_{x-e_1})(q^\ell_{y+e_1} -q^\ell_{y-e_1}) \right>\\
 &=& \omega^2 \sum_{x}\sum_{k} \cfrac{1}{m_x} \left< ({\pi}^k_x)^2 (q^k_{x+e_1} -q^k_{x-e_1})^2\right>\\
&\le& N^d D_N (\omega,\nu) T^2 {\underline m}^{-1}
\end{eqnarray*}
where $D_N$ is defined by (\ref{eq:D_N}). 

\vspace{0,5cm}

{\textbf {Lower bound :}}

For the lower bound we use again the following variational formula 
\begin{equation*}
\left<  {\frak J}_{e_1}, (\lambda -L)^{-1} {\frak J}_{e_1} \right>=\sup_{u} \left\{ 2<u,{\frak J}_{e_1}> -\|u \|_{1,\lambda}^2 -\|Au\|_{-1,\lambda}^2\right\}
\end{equation*}
and we take the test function $u$ in the form
\begin{equation*}
u=\rho \sum_{x \in \TT_N^d} \sqrt{m_x} {\pi}_x \cdot (q_{x+e_1} - q_{x-e_1})
\end{equation*}
with $\rho$ a positive constant. A simple computation shows that
\begin{equation*}
Au=\rho \sum_{x \in \TT_N^d} \left( \sqrt{\cfrac{m_{x+e_1}}{m_{x}}}-\sqrt{\cfrac{m_x}{m_{x+e_1}}}\right)\pi_x \cdot \pi_{x+e_1}
\end{equation*}
Moreover, by lemma \ref{lem:calcul}, one has
\begin{equation*}
Au=\rho \left(\lambda +\gamma (2+d^{-2})\right)^{-1} (\lambda-\gamma S)\left(  \sum_{x \in \TT_N^d} \left( \sqrt{\cfrac{m_{x+e_1}}{m_{x}}}-\sqrt{\cfrac{m_x}{m_{x+e_1}}}\right)\pi_x \cdot \pi_{x+e_1}
\right)
\end{equation*}
Hence, the $H_{-1,\lambda}$ norm of $Au$ is easy to compute and given by
\begin{equation*}
\|Au\|_{-1,\lambda}^2 = \cfrac{\left< Au, \, Au\right>}{\gamma(2+d^{-2})+\lambda}
\end{equation*}
This last quantity is equal to
\begin{equation*}
<Au,Au>= \rho^2 \sum_{x \in \TT_N^d} \sum_{k=1}^d \left( \sqrt{\cfrac{m_x}{m_{x+e_1}}} -\sqrt{\cfrac{m_{x+e_1}}{m_x}}\right)^2\left< (\pi_x^k)^2 (\pi_{x+e_1}^k)^2 \right> 
\end{equation*}

By lemma \ref{lem:calcul}, one has
\begin{eqnarray*}
\| u\|_{1,\lambda}^2&=&<u,(\lambda -\gamma S)u>= (\gamma+\lambda)<u^2>\\
&=& (\gamma+\lambda) \rho^2 \sum_{x \in \TT_N^d} \sum_{k=1}^d m_x <(\pi_x^k)^2 (q_{x+e_1}^k-q_{x-e_1}^k)^2> 
\end{eqnarray*}
and 
\begin{equation*}
<u,{\frak J}_{e_1}>= \rho \sum_{x \in \TT_N^d} \sum_{k=1}^d <(\pi_x^k)^2 (q^k_{x+e_1} -q^k_{x-e_1})^2>
\end{equation*}
Hence, we get
\begin{equation}
\label{eq:vf999}
\begin{array}{l}
\cfrac{1}{N^d} \left<{\frak J}_{e_1}, (\lambda -L)^{-1} {\frak J}_{e_1} \right>\\
\geq \rho C_0 -C_1 \rho^2\\
\end{array}
\end{equation}
with $C_0, C_1$ positive constants given by
\begin{equation*}
C_0 = 2 D_N, \quad C_1= (\gamma+\lambda) {\bar m} D_N +\cfrac{dT^2 \left( \cfrac{\underline m}{\bar m} -\cfrac{\bar m}{\underline m}\right)^2}{\gamma (2+d^{-2})+\lambda}
\end{equation*}
With the optimal choice $\rho=C_0/2C_1$ we get
\begin{equation*}
 \liminf_{\lambda \to 0} \liminf_{N \to \infty} \int_0^{\infty} e^{-\lambda t} N^{-d} {\mathbb E}_\beta{\left[{\frak J}_{e_1}(t),\, {\frak J}_{e_1} (0) \right]}dt  \ge D^{-2}\left\{\gamma {\bar m} D +\cfrac{dT^2 \left( \cfrac{\underline m}{\bar m} -\cfrac{\bar m}{\underline m}\right)^2}{\gamma (2+d^{-2})}\right\}^{-1}
\end{equation*}
with $D$ given by (\ref{eq:D}).
\end{proof}

\begin{remark}
1. As $\gamma \to 0$, the lower bound obtained is of the form $C_0 \gamma$ and the upper bound of the form $C_1 \gamma^{-1}$. Moreover the large pinning limit $\nu \to \infty$ gives null upper and lower bounds.\\
2. The same result can be proved for the microcanonical version of the Green-Kubo formula meaning with the replacement of the canonical measure $\mu^N_\beta= Z_\beta^{-1} \exp(-\beta H)$ by the microcanonical measure which is nothing else than the uniform measure on the shell of constant energy $\left\{ H=N^d \beta^{-1}\right\}$.\\
3. The upper bound is in fact valid for a general disordered {\textit{anharmonic}} chain with interaction potentials $V$ and pinning potential $W$. The reason is that we have still $S(\frak{J}_{e_1})=-{\frak J}_{e_1}$ so that the upper bound remains in force as soon as $N^{-d} <{\frak J}_{e_1}, {\frak J}_{e_1}> \le C$, which is equivalent to $<[V' (q_{e_1}-q_0)]^2> \le C$ uniformly in $N$. 
\end{remark}

Here we deduce from the upper bound obtained for the Laplace transform of the time current-current correlations function an upper bound for the function itself. 

\begin{cor}
\label{cor:1}
We have
\begin{equation*}
(2 t N^d)^{-1}{\mathbb E}_{\beta} \left(\left[\int_0^t {\frak J}_{e_1} (s) ds\right]^2 \right) \leq \cfrac{C}{\gamma + t^{-1}}
\end{equation*}
where $C$ is a positive constant depending on the parameters of the system.
\end{cor}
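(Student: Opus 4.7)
The strategy is to deduce this time-domain bound from the uniform Laplace-transform estimate already supplied by Theorem \ref{th:QGK}, via a standard resolvent/Dynkin-martingale decomposition optimized at the natural scale $\lambda = 1/t$. The argument is carried out in finite volume $N$ and the resulting estimate is uniform in $N$.

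For $\lambda>0$, let $h_\lambda = (\lambda-L)^{-1}{\frak J}_{e_1}\in \mathbb{L}^2(\mu_\beta^N)$, which is well-defined since $L$ generates a Markov $\mathbb{L}^2$-contraction semigroup on the canonical Gibbs space. By Dynkin's formula, $M_\lambda(t):=h_\lambda(\omega_t)-h_\lambda(\omega_0)-\int_0^t(Lh_\lambda)(\omega_s)\,ds$ is a zero-mean $\mathbb{P}_\beta$-martingale; substituting the resolvent identity $Lh_\lambda = \lambda h_\lambda - {\frak J}_{e_1}$ and rearranging yields
\begin{equation*}
\int_0^t {\frak J}_{e_1}(\omega_s)\,ds \;=\; h_\lambda(\omega_0)-h_\lambda(\omega_t) + \lambda\int_0^t h_\lambda(\omega_s)\,ds + M_\lambda(t).
\end{equation*}
Squaring, applying $(a+b+c+d)^2\le 4(a^2+b^2+c^2+d^2)$, taking $\mathbb{P}_\beta$-expectation, and using stationarity together with Cauchy-Schwarz on the drift integral,
\begin{equation*}
\mathbb{E}_\beta\!\left[\Bigl(\int_0^t {\frak J}_{e_1}(s)\,ds\Bigr)^{\!2}\right] \;\le\; 8\,\langle h_\lambda^2\rangle + 4\,\mathbb{E}_\beta[M_\lambda(t)^2] + 4\lambda^2 t^2\,\langle h_\lambda^2\rangle .
\end{equation*}

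To control the three terms I exploit that $A$ is antisymmetric in $\mathbb{L}^2(\mu_\beta^N)$ (since the Hamiltonian flow preserves $\mu_\beta^N$) while $\gamma S$ is symmetric. Testing the resolvent equation against $h_\lambda$ gives
\begin{equation*}
\lambda\langle h_\lambda^2\rangle + \gamma\langle h_\lambda,(-S)h_\lambda\rangle \;=\; \langle h_\lambda,{\frak J}_{e_1}\rangle \;=\; \langle {\frak J}_{e_1},(\lambda-L)^{-1}{\frak J}_{e_1}\rangle,
\end{equation*}
and the standard martingale-bracket identity, again using $A^*=-A$, gives $\mathbb{E}_\beta[M_\lambda(t)^2] = 2t\langle h_\lambda,-Lh_\lambda\rangle = 2t\gamma\langle h_\lambda,-Sh_\lambda\rangle$. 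The upper-bound half of Theorem \ref{th:QGK}—more precisely the inequality $\langle {\frak J}_{e_1},(\lambda-L)^{-1}{\frak J}_{e_1}\rangle \le (\lambda+\gamma)^{-1}\langle {\frak J}_{e_1},{\frak J}_{e_1}\rangle \le CN^d(\lambda+\gamma)^{-1}$ with $C$ depending only on $\omega,\nu,\underline{m},T,d$ (since $D_N$ is uniformly bounded in $N$)—then yields simultaneously $\lambda\langle h_\lambda^2\rangle \le CN^d/(\lambda+\gamma)$ and $\gamma\langle h_\lambda,-Sh_\lambda\rangle \le CN^d/(\lambda+\gamma)$. Substituting back,
\begin{equation*}
\mathbb{E}_\beta\!\left[\Bigl(\int_0^t {\frak J}_{e_1}(s)\,ds\Bigr)^{\!2}\right] \;\le\; \frac{C N^d}{\lambda+\gamma}\Bigl(\frac{8}{\lambda} + 8t + 4\lambda t^2\Bigr).
\end{equation*}
The choice $\lambda = 1/t$ collapses the bracket to a constant multiple of $t$, and dividing by $2tN^d$ delivers the claimed bound with a constant $C$ independent of $N$ and $t$. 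The only mildly delicate point is the twofold appearance of the antisymmetry of $A$ (in the energy identity and in the martingale-variance computation), but this is routine for Hamiltonian-plus-noise dynamics; no new ingredient beyond the uniform Laplace estimate of Theorem \ref{th:QGK} is needed.
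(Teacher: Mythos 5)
Your argument is correct and is precisely the content of the reference the paper invokes (Landim--Yau, Lemma 6.1): the resolvent/Dynkin decomposition of $\int_0^t {\frak J}_{e_1}(s)\,ds$, the energy identity $\lambda\langle h_\lambda^2\rangle+\gamma\langle h_\lambda,(-S)h_\lambda\rangle=\langle h_\lambda,{\frak J}_{e_1}\rangle$, the bracket identity for the Dynkin martingale, and the choice $\lambda=1/t$, all fed by the uniform bound $\langle {\frak J}_{e_1},(\lambda-L)^{-1}{\frak J}_{e_1}\rangle\le (\lambda+\gamma)^{-1}\langle{\frak J}_{e_1},{\frak J}_{e_1}\rangle\le CN^d(\lambda+\gamma)^{-1}$ from the proof of Theorem \ref{th:QGK}. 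The paper simply cites this general Markov-process lemma rather than writing it out, so you have in effect supplied the proof it delegates.
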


\begin{proof}
It is a simple consequence of a general argument valid for Markov processes (\cite{LY}, lemma 6.1).
\end{proof}

\begin{lemma}
\label{lem:amenophis}
The Hamiltonian contribution to the infinite volume homogenized Green-Kubo formula $\kappa^{1,1}_{hom.} -\gamma/d$ (see (\ref{eq:GKinfty})) is positive and finite.
\end{lemma}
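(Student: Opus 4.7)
This is a free by-product of Theorem~\ref{th:AGK}. There we obtained
\begin{equation*}
\kappa^{1,1}_{hom.} - \gamma/d \;=\; \gamma \ll u_0, (-S) u_0 \gg \;=\; \gamma \| u_0 \|_1^2,
\end{equation*}
together with the uniform bound $\| u_\lambda \|_1^2 \le C^2 \gamma^{-1}$. Lower semi-continuity of the $\|\cdot\|_1$-norm under the weak convergence $u_\lambda \rightharpoonup u_0$ in $\mathcal H_1$ passes this bound to the limit, giving $\| u_0 \|_1^2 \le C^2 \gamma^{-1}$ and hence $\kappa^{1,1}_{hom.} - \gamma/d \le C^2 < \infty$.

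\textbf{Positivity.} The strategy is to transpose the test-function lower bound of the finite volume Theorem~\ref{th:QGK} into the semi-inner product $\ll \cdot, \cdot \gg$. Starting from $\lambda u_\lambda - L u_\lambda = j^a_{0,e_1}$ and the decomposition $L = A + \gamma S$ into antisymmetric and symmetric parts in $\ll \cdot, \cdot \gg$, one has the Dirichlet-type variational formula
\begin{equation*}
\ll u_\lambda, j^a_{0,e_1} \gg \;=\; \sup_v \bigl\{ 2 \ll v, j^a_{0,e_1} \gg - \| v \|_{1,\lambda}^2 - \| A v \|_{-1,\lambda}^2 \bigr\},
\end{equation*}
with $\| v \|_{1,\lambda}^2 = \ll v, (\lambda - \gamma S) v \gg$ and $\|\cdot\|_{-1,\lambda}$ the associated dual norm, the supremum running over local functions $v$. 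I take the local analog of the test function used in Theorem~\ref{th:QGK}, namely $v = \rho f_0$ with $f_0 = \sqrt{m_0}\, \pi_0 \cdot (q_{e_1} - q_{-e_1})$ and $\rho > 0$.

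The three quantities are evaluated along the same lines as in the finite volume proof. By Lemma~\ref{lem:calcul}, $S f_0 = -f_0$, so $\| f_0 \|_{1,\lambda}^2 = (\lambda + \gamma) \ll f_0, f_0 \gg$; Gaussian independence of the momenta under $\mu_\beta$ makes $\ll f_0, j^a_{0,e_1} \gg$ reduce to a strictly positive $\EE^*$-averaged constant $C_0$ of the same type as the $D$ of \eqref{eq:D}. A direct computation yields $A f_0 = K + P$, with $K = \sqrt{m_0/m_{e_1}}\, \pi_0 \cdot \pi_{e_1} - \sqrt{m_0/m_{-e_1}}\, \pi_0 \cdot \pi_{-e_1}$ and $P = (\omega \Delta q_0 - \nu q_0) \cdot (q_{e_1} - q_{-e_1})$. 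The same summation by parts that in Theorem~\ref{th:QGK} eliminates the potential contribution from $A u$ in finite volume now shows $\sum_{x \in \ZZ^d} \tau_x P = 0$, so that $\ll P, g \gg = 0$ for every local $g$ and consequently $\| A f_0 \|_{-1,\lambda}^2 = \| K \|_{-1,\lambda}^2$; since each $\pi_x \cdot \pi_{x+e_1}$ is an eigenfunction of $S$ with eigenvalue $-(2 + d^{-2})$ (Lemma~\ref{lem:calcul}), the latter equals $\ll K, K \gg / (\lambda + \gamma(2 + d^{-2}))$. Optimizing the resulting quadratic in $\rho$ and then sending $\lambda \to 0$ produces a strictly positive lower bound with exactly the structure of the one in Theorem~\ref{th:QGK}, the normalized sums over $\TT_N^d$ being replaced by $\EE^*$-averages. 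I expect the principal technical obstacle to lie in the rigorous justification that a total spatial gradient $P$ is null for $\|\cdot\|_{-1,\lambda}$, i.e.\ that $\ll P, g \gg = 0$ for all local $g$: this requires controlling the $O(K^{d-1})$ boundary terms generated by summation by parts against the $(2K+1)^{-d}$ normalization of the finite-box limit defining $\ll \cdot, \cdot \gg$, a standard but delicate point in the homogenization framework that should follow from the uniform boundedness of the masses and the decay of the $q$-correlations under $\mu_\beta$.
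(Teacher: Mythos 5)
Your proof is correct and follows essentially the same route as the paper: the paper's own argument is precisely the variational formula for $\ll j^a_{0,e_1},(\lambda-L)^{-1}j^a_{0,e_1}\gg$, with the upper bound obtained by dropping $\|Au\|_{-1,\lambda}^2$ together with $Sj^a_{0,e_1}=-j^a_{0,e_1}$ (equivalent to your appeal to Theorem \ref{th:AGK}), and the lower bound obtained from the same test function, written there as the formal sum $\rho\sum_{x}\sqrt{m_x}\,\pi_x\cdot(q_{x+e_1}-q_{x-e_1})$, i.e.\ your local representative $\rho f_0$. The only remark worth adding is that the ``delicate point'' you flag at the end is in fact immediate: from the definition of $\ll\cdot,\cdot\gg$ as a sum over all translates one has $\ll h,\tau_z g\gg=\ll h, g\gg$ for local $h,g$, so the potential part $P$ of $Af_0$, being a finite sum of terms of the form $\tau_z g-g$, is null in $\LL^2_*$ with no boundary-term estimate required.
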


\begin{proof}
The proof is a simple rephrasing of the proof of the theorem above. We have just to replace the inner product $<\cdot,\cdot>$ by the inner product with translations $\ll \cdot, \cdot \gg$. 

We have
\begin{equation*}
{\EE}^* \left[  \int_0^\infty dt e^{-\lambda t} \sum_{z \in \ZZ^d} \EE_\beta \left[ j^{m}_{0,e_1} (t) j^{m}_{z,z+e_1}(0)\right] \right]= \ll j_{0,e_1}^a, (\lambda -L)^{-1} j_{0,e_1}^a \gg
\end{equation*}
and again a variational formula for the last term is available
\begin{equation*}
\ll j_{0,e_1}^a, (\lambda -L)^{-1} j_{0,e_1}^a \gg =\sup_{u} \left\{ 2\ll u, j_{0,e_1}^a \gg -\|u \|_{1,\lambda}^2 -\|Au\|_{-1,\lambda}^2\right\}
\end{equation*}  
where the supremum is now carried over local smooth function $u(\bpi,\bq)$ and $H_{1,\lambda}$ and $H_{-1,\lambda}$ norms are defined by
\begin{equation*}
\| f\|_{\pm1, \lambda}^2 =\ll f, (\lambda -\gamma S)^{\pm 1} f \gg
\end{equation*}

To obtain the upper bound, we neglect the term coming from the antisymmetric part $Au$ and remark that $S j_{0,e_1}^a = -j_{0,e_1}^a$.  For the lower bound, we use the same test function $u$ as in the theorem above:
$$u=\rho \sum_{x \in {\ZZ}^d} {\sqrt m_x} \pi_x \cdot (q_{x+e_1} -q_{x-e_1})$$
\end{proof}

\begin{remark}
Suppose $\{m_x\}$ forms a stationary sequence of random masses with law ${\mathbb E}^*$ and let us denote $\mu={\mathbb E}^{*} (1/m_0)$. Assume that 
\begin{equation*}
\cfrac{\gamma}{d} + \lim_{\lambda \to 0} \lim_{N \to \infty} \int_0^{\infty} e^{-\lambda t} N^{-d} {\mathbb E}{\left[{\frak J}(t),\, {\frak J}(0) \right]}dt =\kappa^{1,1} (\{m_x\})
\end{equation*}  
exists. We expect  $\kappa^{1,1} (\{m_x\})$ to depend only on the statistics of $m_x$ and not on the particular realization of random masses and to be equal to the infinite volume homogenized Green-Kubo formula $\kappa^{1,1}_{hom.}$ (see (\ref{eq:21})). Upper bounds show in fact that
\begin{equation*}
\kappa^{1,1}_{hom.} \leq \kappa^{1,1} (\{1/\mu\})
\end{equation*}
where we recall that $\kappa^{1,1} (\{1/\mu\})$ is the thermal conductivity of the homogenous chain with mass $1/\mu$ (also called in the homogenization literature the "effective conductivity"). It is an open problem to know if this inequality is in fact an equality or not.
\end{remark}

\section{Appendix}

\begin{lemma}
\label{lem:calcul}
Let $x \in \TT_N^d$ and $k,\ell,m \in \{1, \ldots,d\}$. We have
\begin{itemize}
\item
$S(\pi_x^\ell)=- \pi_x^\ell$
\item
$S(|\pi_x|^2)=d^{-1} \Delta (|\pi_x^2|)$
\item
$S(\pi_x^k \pi_{x+e_m}^\ell)= -\left( 2+d^{-2}\right)\pi_{x}^k \pi_{x+e_m}^\ell$
\end{itemize}
\end{lemma}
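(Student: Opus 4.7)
The plan is to prove all three identities by direct calculation, using the fact that each $Y^{i,j}_{a,a+e_b}$ acts as an infinitesimal rotation in the $(\pi^i_a,\pi^j_{a+e_b})$-plane and annihilates every other momentum coordinate. The starting point is the elementary action on a single coordinate: $Y^{i,j}_{a,a+e_b}(\pi^n_y)$ equals $\pi^j_{a+e_b}$ when $(y,n)=(a,i)$, equals $-\pi^i_a$ when $(y,n)=(a+e_b,j)$, and vanishes otherwise, so applying $Y$ twice gives $Y^2(\pi^n_y)=-\pi^n_y$ in each non-trivial case. Identity (i) then follows by counting: the index choices $(a,i)=(x,\ell)$ and $(a+e_b,j)=(x,\ell)$ each leave two free indices and thus contribute $d^2$ copies of $-\pi^\ell_x$, so after the $\tfrac{1}{2d^2}$ prefactor one obtains $S(\pi^\ell_x)=-\pi^\ell_x$.

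For identity (ii), the cleanest route exploits that $Y^{i,j}_{a,a+e_b}$ preserves the pair kinetic energy $|\pi_a|^2+|\pi_{a+e_b}|^2$. This gives $Y(|\pi_a|^2)=2\pi^i_a\pi^j_{a+e_b}=-Y(|\pi_{a+e_b}|^2)$, hence $Y^2(|\pi_a|^2)=2\bigl[(\pi^j_{a+e_b})^2-(\pi^i_a)^2\bigr]$ and symmetrically for $|\pi_{a+e_b}|^2$. The vector fields touching $\pi_x$ split into two families $a=x$ and $a+e_b=x$; summing the free indices $i,j$ produces a factor $d$ on each neighbour term $|\pi_{x\pm e_b}|^2$ and a factor $d^2$ on the diagonal. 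Dividing by $2d^2$ and collecting yields $d^{-1}\sum_b\bigl(|\pi_{x+e_b}|^2+|\pi_{x-e_b}|^2-2|\pi_x|^2\bigr)=d^{-1}\Delta(|\pi_x|^2)$.

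The third identity is where the main computation sits. Leibniz gives
\begin{equation*}
[Y^{i,j}_{a,a+e_b}]^{2}\bigl(\pi^k_x\pi^\ell_{x+e_m}\bigr)=Y^2(\pi^k_x)\,\pi^\ell_{x+e_m}+2Y(\pi^k_x)Y(\pi^\ell_{x+e_m})+\pi^k_x\,Y^2(\pi^\ell_{x+e_m}).
\end{equation*}
After summation and normalization the first and third terms each reduce to $-\pi^k_x\pi^\ell_{x+e_m}$ by identity (i), contributing $-2\pi^k_x\pi^\ell_{x+e_m}$ in total. The cross term is non-zero only when the same $Y$ acts simultaneously on $\pi^k_x$ and $\pi^\ell_{x+e_m}$, and a short case analysis over the four possible combinations of the two branches of $Y$ leaves exactly one admissible index choice $(a,b,i,j)=(x,m,k,\ell)$; the reversed orientation, where $Y$ would hit $\pi^\ell_{x+e_m}$ via its $(a,i)$-branch and $\pi^k_x$ via its $(a+e_b,j)$-branch, would force $e_b=-e_m$, which is excluded because $b\in\{1,\dots,d\}$. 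For that unique term $Y(\pi^k_x)Y(\pi^\ell_{x+e_m})=-\pi^k_x\pi^\ell_{x+e_m}$, giving an extra $-d^{-2}\pi^k_x\pi^\ell_{x+e_m}$ after the $\tfrac{1}{2d^2}$ prefactor, whence the total $-(2+d^{-2})\pi^k_x\pi^\ell_{x+e_m}$. The only thing to watch, and the main potential source of error, is precisely this orientation bookkeeping: because $S$ sums only over positive lattice directions, each unordered bond $\{x,x+e_m\}$ is visited once rather than twice, which is what produces the asymmetric correction $d^{-2}$ instead of $2d^{-2}$.
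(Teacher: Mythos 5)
Your computation is correct and is exactly the ``simple computation'' the paper leaves to the reader: the action of each $Y^{i,j}_{a,a+e_b}$ on single momentum coordinates, the $2d^2$-fold multiplicity giving (i), the two bond families $a=x$ and $a+e_b=x$ giving (ii), and, crucially for (iii), the identification of the single surviving cross term $(a,b,i,j)=(x,m,k,\ell)$ with the reversed orientation correctly excluded because $S$ sums only over positive directions. The only cosmetic imprecision is the phrase ``a factor $d^2$ on the diagonal'' in (ii) --- each of the two bond families contributes $2d$ copies of $-|\pi_x|^2$ per direction $b$, not $d^2$ --- but the final displayed formula $d^{-1}\Delta(|\pi_x|^2)$ that you assemble from it is the right one.
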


\begin{proof}
It is a simple computation.
\end{proof}

We prove here the existence of a measurable set $\Omega_0$ of initial conditions with full measure w.r.t. to the infinite volume Gibbs measure $\mu_\beta$ such that the infinite volume dynamics starting from $\omega \in \Omega_0$ exist. This defines a strongly continuous semigroup $(P_t)_{t \geq 0}$ on ${\mathbb L}^2 (\Omega_0, \mu_\beta)$ with generator $L$. Moreover the set of square integrable local smooth functions ${\mathcal D}$ is a core for $L$.  These arguments are by now standard (see \cite{FS}) and we repeat them for the convenience of the reader.

The dynamics is given by the following stochastic differential equations:
\begin{equation}
\label{eq:sde}
\begin{cases}
dq_x ={m_x}^{-1/2} \pi_x dt\\
d\pi_x = m_{x}^{-1/2} (\omega \Delta q_x -\nu q_x )dt  -\gamma {\pi}_x dt +\sqrt{\gamma} {\pi}_{x+1} dW_{x}-{\sqrt \gamma} \pi_{x-1} dW_{x-1}
\end{cases}
\end{equation}
where $\{ W_x \, ; \, x \in {\mathbb Z} \}$ are independent standard Brownian motions. We note ${\mathcal F}_t$ the $\sigma$-algebra generated by $\{ W_{x} (s), s \leq t \, ; x \in {\mathbb Z} \}$.

The first problem is do define the infinite volume Gibbs measure $\mu_\beta$. Indeed, it is well known that 
\begin{equation*}
\mu_\beta^N (q_0^2) = 
\begin{cases}
{\mathcal O} (1) \text{ if $\nu >0$ or $d\geq 3$}\\
{\mathcal O} (\log N) \text{ if $\nu=0$ and $d=2$}\\
{\mathcal O} (N) \text{ if $\nu=0$ and $d=1$}
\end{cases}
\end{equation*}
Hence, in dimension $d=1,2$, if $\nu=0$, the infinite volume Gibbs measure is not well defined. To overcome this difficulty we go over the gradient field $\eta_{(x,y)}=q_x - q_y, |x-y|=1$, which has zero (discrete) curl. Let $\chi$ the set of vector fields $\eta$ on $\ZZ^d$ with zero curl. An infinite volume Gibbs measure  $\mu$ on $\chi$ is defined by the conditions $\mu(\eta_{(x,y)}^2) < +\infty$ and via DLR equations. One can prove the following lemma
\begin{lemma}[\cite{FS}, theorems 3.1 and 3.2]
There exists a unique shift ergodic Gibbs measure $\mu_\beta$ on $\chi$ such that
\begin{equation}
\int_{\chi} \eta_{(0,e_j)} d\mu_\beta (\eta) =0
\end{equation} 
\end{lemma}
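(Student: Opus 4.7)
The plan is to invoke the Funaki--Spohn framework for gradient Gibbs measures with strictly convex interaction potentials, specialised to the quadratic potential $V(r)=\omega r^{2}$. The argument splits naturally into existence, uniqueness among shift-ergodic measures with prescribed tilt, and the zero-tilt normalisation.

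For existence I would build the candidate measure $\mu_{\beta}$ as a thermodynamic limit. On the torus $\TT_{N}^{d}$ the finite-volume Gibbs measure $\mu_{\beta}^{N}$ pushed forward by the gradient map $q\mapsto(\eta_{(x,y)})_{|x-y|=1}$ is a centred Gaussian measure on the zero-curl subspace $\chi_{N}\subset\chi$, with covariance expressible through the discrete Laplacian restricted to mean-zero configurations. The point of passing to gradients is precisely to bypass the divergence of $\mu_{\beta}^{N}(q_{0}^{2})$ in the unpinned low-dimensional cases recorded just above the lemma: the gradient two-point function involves second differences of the Green function and is bounded uniformly in $N$, giving
\[
\sup_{N,\,|x-y|=1}\mu_{\beta}^{N}\bigl(\eta_{(x,y)}^{2}\bigr)\le C(\omega,\nu,\beta).
\]
Tightness of $\{\mu_{\beta}^{N}\}_{N\ge 1}$ on $\chi$ (with the product topology) is then immediate, so some subsequence converges weakly to a probability measure on $\chi$ that satisfies the DLR equations by continuity of the Gaussian local specification.

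For uniqueness among shift-ergodic gradient Gibbs measures with a prescribed tilt $u\in\RR^{d}$ (here $u=0$, which is what the normalisation $\int\eta_{(0,e_{j})}\,d\mu_{\beta}=0$ encodes), I would appeal to the Funaki--Spohn theorem for strictly convex interaction potentials. The key input is the Helffer--Sjöstrand random-walk representation of the covariance; in the purely harmonic case it collapses to an explicit formula in terms of the Green function of the discrete Laplacian on $\ZZ^{d}$. Consequently, any shift-ergodic Gibbs measure with tilt zero has the same two-point function as the limit constructed above, and being Gaussian is thereby determined. Shift-ergodicity of the limit itself follows from the extremality of a translation-invariant Gaussian gradient field with ergodic covariance structure.

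The zero-tilt normalisation is the easy step: on the torus, the involution $q\mapsto -q$ is a symmetry of $\mu_{\beta}^{N}$, hence $\mu_{\beta}^{N}(\eta_{(0,e_{j})})=0$ for every $N$, and this identity survives the weak limit. The main obstacle is the uniqueness assertion in the previous paragraph, which for a genuinely anharmonic strictly convex potential rests on the full Helffer--Sjöstrand / random-walk-in-random-environment machinery of \cite{FS}; fortunately, in our quadratic setting it reduces to the classical uniqueness of the Green function on $\ZZ^{d}$ subject to the imposed tilt and integrability constraints, so the heavy lifting of \cite{FS} is only needed as a black-box citation.
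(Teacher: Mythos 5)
The paper does not prove this lemma: it is imported verbatim from Funaki and Spohn (\cite{FS}, Theorems 3.1 and 3.2), so there is no in-paper argument to compare yours against. Your sketch is a reasonable reconstruction of how the result specialises to the quadratic potential. The existence part (uniform second-moment bounds on nearest-neighbour gradients, hence tightness on $\chi$ in the product topology, and DLR equations preserved under the weak limit by continuity of the Gaussian specification) and the zero-tilt normalisation via the $q \mapsto -q$ symmetry of the finite-volume measures are both fine.

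The soft spot is uniqueness. You claim that in the harmonic case it ``reduces to the classical uniqueness of the Green function subject to the imposed tilt and integrability constraints,'' but that reduction presupposes that every shift-ergodic, zero-tilt gradient Gibbs measure is Gaussian. That is not automatic: the DLR equations for a Gaussian specification can in principle admit non-Gaussian translation-invariant solutions, and showing that the shift-ergodic ones with a prescribed tilt form a single point is precisely the content of the Funaki--Spohn uniqueness theorem, which they prove by a dynamic coupling argument for the $\nabla\phi$ dynamics rather than by computing covariances. So your argument, exactly like the paper's, ultimately uses \cite{FS} as a black box for the one genuinely hard step; you should say so explicitly rather than suggest the Green-function computation replaces it. Two further small caveats: \cite{FS} is written for scalar fields while here $q_x \in \RR^d$ (the components decouple in the harmonic case, so this is harmless but worth a sentence), and their hypotheses on the potential ($C^2$, bounded and strictly positive second derivative) are indeed satisfied by $V(r)=\omega r^2$ only when $\omega>0$, which should be recorded as the standing assumption.
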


Clearly the dynamics for $(\bpi, \bq)$ in (\ref{eq:sde}) can be read as a dynamics for the gradient field $\eta_{(x,y)}$. Moreover the quantities of interest like the current are functions of the $\eta$'s. Hence, only the existence of the dynamics for the gradient field is needed. Nevertheless to simplify the argument we restrict the proof to the one dimensional pinned case for which $\mu_\beta (q_x^2) < +\infty$ for any $x \in \ZZ$.

Let $\Omega=({\mathbb R} \times {\mathbb R})^{\mathbb Z}$ be the configuration space equipped with the product topology. A typical configuration is of the form $\omega=(\pi_x, q_x)_{x \in \mathbb Z}$ with $\pi_x= m_x^{-1/2} p_x$ and $q_x , p_x$ the position and momentum of the atom $x$ with mass $m_x$ we assume to be uniformly bounded above an below by finite positive constants . 

\begin{lemma}[Existence of the infinite volume dynamics]
There exists a measurable set $\Omega_0 \subset ({\mathbb R} \times {\mathbb R})^{\mathbb Z}$ with full measure w.r.t. $\mu_\beta$ such that for any initial condition $\omega (0) \in \Omega_0$ there exists a ${\mathcal F}_t$-adapted continuous stochastic process $\{ \omega (t)\}$ which satisfies (\ref{eq:sde}). Moreover, $\mu_\beta$ is a stationary probability measure for $\{ \omega (t)\}$.
\end{lemma}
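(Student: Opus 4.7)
The plan is to construct the infinite volume dynamics as the limit of a sequence of finite volume approximations, following the standard strategy of Fritz--Spohn cited in the paper. Working throughout in the one dimensional pinned case so that $\mu_\beta(q_x^2)<\infty$, I would for each $N\ge 1$ consider the finite dimensional SDE (\ref{eq:sde}) restricted to $\{-N,\ldots,N\}$ with reflecting (or zero) boundary conditions. Existence, uniqueness and invariance of the finite volume Gibbs measure $\mu_\beta^N$ for the resulting process $\omega^N(t)=(\pi_x^N,q_x^N)_{|x|\le N}$ are routine since the drift is linear in the bounded coefficients $m_x^{\pm 1/2}$ and the noise is multiplicative linear in $\pi$.

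The heart of the argument is a weighted energy estimate uniform in $N$. Fix an exponentially decaying weight $w(x)=e^{-\alpha|x|}$ with $\alpha>0$ small, and put $\mathcal E_N(t)=\sum_{|x|\le N} w(x)(|\pi_x^N(t)|^2+q_x^N(t)^2)$. Applying It\^o's formula one collects three types of terms: contributions from the Hamiltonian drift, which after summation by parts produce nearest-neighbor terms of order $|w(x)-w(x\pm 1)|=O(\alpha)w(x)$; a linear dissipation $-2\gamma w(x)\pi_x^2$ coming from the $-\gamma\pi_x dt$ term; and a quadratic variation $\gamma w(x)(\pi_{x+1}^2+\pi_{x-1}^2)$ from the martingales. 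For $\alpha$ small enough the last two balance after reindexing, the Hamiltonian cross terms are absorbed, and one obtains $\tfrac{d}{dt}\EE[\mathcal E_N(t)]\le C(\alpha)\EE[\mathcal E_N(t)]$; hence by Gronwall, $\EE[\mathcal E_N(t)]\le e^{C(\alpha)t}\EE[\mathcal E_N(0)]$, with $\EE[\mathcal E_N(0)]\le\sum_x w(x)\mu_\beta(|\pi_x|^2+q_x^2)<\infty$.

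From this estimate one deduces that $\omega^N(t)$ is Cauchy in $L^2(\mu_\beta)$ coordinate by coordinate, uniformly on compact time intervals: the difference $\omega^N-\omega^M$ ($M<N$) satisfies a linear SDE driven, from the point of view of the inner box, only by the boundary contributions near $\pm M$, which are exponentially small in the $w$-weighted norm. Define $\omega(t)$ as the almost sure limit and let $\Omega_0$ be the corresponding full measure set. Passing to the limit in the integral form of (\ref{eq:sde}) is allowed by dominated convergence, since each coordinate equation involves only finitely many neighbors whose values converge a.s. and are controlled by the energy bound. Stationarity of $\mu_\beta$ then follows by testing the invariance $\mu_\beta^N P^N_t = \mu_\beta^N$ against local bounded functions and passing to the limit using $\mu_\beta^N \to \mu_\beta$ on local observables.

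The main obstacle is the weighted energy estimate: because the martingale quadratic variation produces terms $\gamma w(x)\pi_{x\pm 1}^2$ that are off-diagonal in the spatial index, they must be compared with the diagonal dissipation $2\gamma w(x)\pi_x^2$ provided by the drift. Exponential weights with small $\alpha$ are precisely what makes $w(x\pm 1)\le (1+O(\alpha))w(x)$, so that the dissipation dominates; polynomial weights would force a more delicate bookkeeping of gradients. Once this estimate is in place, the remainder is a routine adaptation of the construction in \cite{FS}.
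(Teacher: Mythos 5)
Your construction is correct in outline, but it takes a genuinely different and longer route than the paper. The paper's proof is essentially a one\--liner: it introduces the same exponentially weighted space $\Omega_0=\{\omega:\ \sum_k e^{-|k|}(|\pi_k|^2+|q_k|^2)<\infty\}$ and observes that, the system being \emph{linear} with coefficients $m_x^{\pm 1/2}$ uniformly bounded, the entire right-hand side of (\ref{eq:sde}) (drift and diffusion coefficients alike) is globally Lipschitz with respect to this weighted $\ell^2$ norm; a Picard iteration performed directly on the infinite lattice then yields a unique strong solution, and invariance of $\mu_\beta$ is quoted as standard from \cite{FS}. You instead go through finite-volume approximations, a Gronwall-type weighted energy estimate uniform in $N$, a Cauchy/finite-propagation argument for $\omega^N-\omega^M$, and a limiting argument for stationarity. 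Your key computation is sound --- in particular the observation that the off-diagonal quadratic variation $\gamma w(x)(\pi_{x+1}^2+\pi_{x-1}^2)$ recombines with the diagonal dissipation $-2\gamma w(x)\pi_x^2$ into $\gamma\,\pi_x^2\,\Delta w(x)=O(\alpha^2)w(x)\pi_x^2$ is exactly the statement $S(|\pi_x|^2)=d^{-1}\Delta(|\pi_x|^2)$ of Lemma \ref{lem:calcul} --- and your scheme would survive nonlinearities for which global Lipschitz continuity fails, which is its main advantage. What it costs you is extra bookkeeping that is unnecessary here: the truncation, the re-centering of weights in the Cauchy step, and the verification that the finite-volume Gibbs measures with your chosen boundary conditions are exactly invariant and converge to $\mu_\beta$ on local observables, none of which is needed once one notices the global Lipschitz property. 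Two small points to tighten if you keep your route: state explicitly which boundary conditions you impose on the \emph{noise} at $\pm N$ so that the truncated $S$ remains symmetric and kinetic-energy conserving (otherwise $\mu_\beta^N$-invariance of the approximating dynamics is not automatic), and note that the Cauchy estimate requires the weighted bound with weights centered at an arbitrary site, not just at the origin, to control the boundary source at $\pm M$.
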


\begin{proof}
We introduce $\Omega_0 =\left\{ \omega \in \Omega; \| \omega \|^2=\sum_{k \in {\mathbb Z}} e^{-|k|}(|\pi_k|^2 + |q_k|^2) < +\infty \right\}$ which is a measurable set with full measure w.r.t. $\mu_\beta$. Since the right hand side of (\ref{eq:sde}) is uniformly Lipschitz continuous w.r.t. the $\ell^2$ norm $\| \cdot \|$ an iteration procedure gives the existence and uniqueness of a strong solution to (\ref{eq:sde}). The fact that $\mu_\beta$ is invariant for $\{\omega (t)\}$ is standard (\cite{FS}).
\end{proof}

By this way we define a semigroup $(P_t)_{t \geq 0}$ on the Banach space ${\mathcal B} (\Omega_0)$ of bounded measurable functions on $\Omega_0$. For any $f \in {\mathcal B} (\Omega_0)$, we have
\begin{equation*}
\forall \omega \in \Omega_0, \quad (P_t f) (\omega) = {\mathbb E}_{\omega} \left[ f(\omega_t)\right]
\end{equation*}
where $\omega_t$ is the strong solution of (\ref{eq:sde}). Moreover $(P_t)_t$ is contractive w.r.t. the ${\mathbb L}^2$-norm associated to $\mu_\beta$. It follows that $P_t$ can be extended to a semi-group of contraction on ${\mathbb L}^2 (\Omega, \mu_\beta)$. 

Let ${\mathcal D}_0$ be the set of local smooth bounded functions on $\Omega$. By continuity of the paths $\omega_t$ and the bounded convergence theorem, we have that for any $\phi \in {\mathcal D}_0$, 
\begin{equation*}
\lim_{t \to 0} \mu_\beta ((P_t \phi - \phi )^2)=0
\end{equation*}
Since any function in ${\mathbb L}^2 (\Omega, \mu_\beta)$ can be approximated by a sequence of elements of ${\mathcal D}_0$ and $P_t$ is contractive, it follows that $P_t$ is a strongly continuous semigroup of contractions on ${\mathbb L}^2 (\Omega, \mu_\beta)$.  

Let $\mathcal D$ be the space of smooth (not necessarily  bounded) square integrable local functions on $\Omega$. By It\^o's formula, we have that for any $\phi \in {\mathcal D}$
\begin{equation*}
\forall \omega \in \Omega_0, \quad (P_t \phi) (\omega) = (P_0 \phi) (\omega) + \int_0^t (P_s L \phi) (\omega) ds
\end{equation*} 
where $L$ is the formal generator defined in section $2$. This shows that any $\phi \in {\mathcal D}$ belongs to the domain of the generator ${\hat L}$ of the ${\mathbb L}^2$-semigroup $(P_t)_{t \geq 0}$ and that $L$ and $\hat L$ coincide on ${\mathcal D}$. By lemma 2.11 and proposition 3.1 of \cite{EK},  ${\mathcal D}$ is a core for ${\hat L}$. we have proved the following lemma

\begin{lemma}
\label{lem:core}
There exists a closed extension of $L$ in ${\mathbb L}^2 (\Omega, \mu_\beta)$ such that the space ${\mathcal D}$ of square integrable smooth local functions on $\Omega$ is a core. This closed extension is the generator of the strongly continuous semigroup $(P_t)_t$ defined above. 
\end{lemma}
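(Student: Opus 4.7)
The plan is to assemble the pieces developed immediately above the lemma statement into four clean steps: construct the pathwise semigroup on $\Omega_0$, extend it to $\mathbb{L}^2(\mu_\beta)$ by stationarity, verify strong continuity on a dense set, and then apply It\^o's formula together with a standard core criterion from \cite{EK}.

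First, using the pathwise solution $\omega_t$ of \eqref{eq:sde} provided by the previous lemma together with $\mu_\beta$-invariance, I would define $(P_t f)(\omega) := \mathbb{E}_\omega[f(\omega_t)]$ for $f \in \mathcal{B}(\Omega_0)$ and note that Jensen combined with stationarity gives
\begin{equation*}
\mu_\beta\bigl((P_t f)^2\bigr) \le \mu_\beta(P_t f^2) = \mu_\beta(f^2).
\end{equation*}
Since $\mathcal{B}(\Omega_0) \cap \mathbb{L}^2(\mu_\beta)$ is dense in $\mathbb{L}^2(\Omega, \mu_\beta)$, $P_t$ extends uniquely to a contraction semigroup on the latter space. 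For strong continuity, I would first verify it on the space $\mathcal{D}_0$ of bounded local smooth functions: continuity of paths $t \mapsto \omega_t$ on $\Omega_0$ plus boundedness of $\phi \in \mathcal{D}_0$ and dominated convergence yield $P_t \phi \to \phi$ in $\mathbb{L}^2(\mu_\beta)$. Density of $\mathcal{D}_0$ in $\mathbb{L}^2(\mu_\beta)$ together with the contraction bound then extends strong continuity to the whole space via the usual $\varepsilon/3$ argument.

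Next I would identify the action of the generator $\hat{L}$ of $(P_t)$ on the larger space $\mathcal{D}$ of smooth square integrable local functions. For $\phi \in \mathcal{D}$, an It\^o computation applied to $\phi(\omega_t)$ produces
\begin{equation*}
\phi(\omega_t) = \phi(\omega) + \int_0^t (L\phi)(\omega_s)\, ds + M_t,
\end{equation*}
where $M_t$ is a local martingale built from the Brownian motions $\{W_x\}$ and $L$ is the formal generator introduced in the section on dynamics. Taking $\mu_\beta$-expectation and invoking stationarity converts this to the $\mathbb{L}^2$ identity $P_t \phi = \phi + \int_0^t P_s L\phi\, ds$, which places $\phi$ in $\mathrm{Dom}(\hat{L})$ with $\hat{L}\phi = L\phi$. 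The core property of $\mathcal{D}$ then follows from lemma $2.11$ and proposition $3.1$ of \cite{EK}, whose hypotheses reduce to density of $\mathcal{D}$ in $\mathbb{L}^2(\mu_\beta)$ (clear) and invariance of $\mathcal{D}$ under $(P_t)$ modulo $\mathbb{L}^2$-closure.

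The main obstacle is the It\^o step, since functions in $\mathcal{D}$ are only polynomially bounded, so neither $\phi(\omega_t)$ nor $L\phi(\omega_t)$ is bounded and the passage to expectations requires justification. I would handle this by truncation: introduce stopping times $\tau_R = \inf\{t \ge 0 : \|\omega_t\| > R\}$ for the weighted norm used in the existence proof, apply It\^o up to $t \wedge \tau_R$, and send $R \to \infty$ using the Gaussian tails of $\mu_\beta$ and stationarity to obtain uniform-in-$t$ bounds $\sup_t \mu_\beta(\phi(\omega_t)^2) + \sup_t \mu_\beta((L\phi)(\omega_t)^2) < \infty$. Once these moments are in hand, dominated convergence delivers the desired $\mathbb{L}^2$-identity and the remaining core verification is purely formal semigroup theory.
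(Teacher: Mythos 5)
Your proposal follows essentially the same route as the paper: pathwise definition of $(P_t)_t$ on $\Omega_0$, extension to an $\mathbb{L}^2(\mu_\beta)$-contraction semigroup via stationarity, strong continuity checked first on the bounded local smooth functions ${\mathcal D}_0$ and then extended by density, the It\^o identity $P_t\phi = \phi + \int_0^t P_s L\phi\, ds$ for $\phi\in{\mathcal D}$, and the core criterion of lemma 2.11 and proposition 3.1 of \cite{EK}. Your added localization argument (stopping times plus Gaussian moment bounds) merely fills in a step the paper leaves implicit, so the two proofs match.
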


\end{document}